\title{On the Expressiveness of Mixed Choice Sessions}
\author{Kirstin Peters
	\institute{Universität Augsburg\\ Germany}
	\email{kirstin.peters@uni-a.de}
\and
	Nobuko Yoshida
	\institute{Imperial College London\\ UK}
	\email{n.yoshida@imperial.ac.uk}
}
\begin{document}

\maketitle


\begin{abstract}
Session types provide a flexible programming style for structuring
interaction, and are used to guarantee a safe and consistent
composition of distributed processes. Traditional session types 
include only one-directional input (external) and output (internal)
guarded choices. This prevents the session-processes 
to explore the full expressive power  
of the \piCal where the mixed choices are proved more expressive 
than the (non-mixed) guarded choices. To account this issue,   
recently Casal, Mordido, and Vasconcelos proposed 
the binary session types with mixed choices (\CMVmix).
This paper carries a surprising, unfortunate result on \CMVmix:
in spite of an inclusion of unrestricted channels with mixed choice, 
\CMVmix's mixed choice is rather separate and not 
mixed. We prove this negative result using 
two methodologies (using either the leader election problem or a synchronisation pattern as distinguishing feature), showing that there exists no good encoding 
from the \piCal into \CMVmix, preserving distribution.  
We then close their open problem on  
the encoding from \CMVmix into 
\CMV (without mixed choice), proving 
its soundness and thereby that the encoding is good up to coupled similarity.  
\end{abstract}


\section{Introduction}
\label{sec:introduction}

Starting with the landmark result by Palamidessi in \cite{palamidessi97} and followed up by results such as \cite{nestmann00, palamidessi03, gorla10, peters12, petersNestmann12, breakingSymmetries16} it was shown that the key to the expressive power of the full \piCal in comparison to its sub-calculi such as \eg the asynchronous \piCal is \emph{mixed choice}.

\emph{Mixed choice} in the \piCal is a choice construct that allows to choose between inputs and outputs.
In contrast, \eg \emph{separate choices} are constructed from either only inputs or only outputs.
The additional expressive power of mixed choice relies on its ability to rule out alternative options of the opposite nature, \ie a term can rule out its possibility to perform an input by doing an output, whereas without mixed choice inputs can rule out alternative inputs only and outputs may rule out only alternative outputs.

To compare calculi with different variants of choice, we try to build an encoding or show that no such encoding exists \cite{boerPalamidessi1991, parrow08}.
The existence of an encoding that satisfies relevant criteria shows that the target language is expressive enough to emulate the behaviours in the source language.
Gorla \cite{gorla10} and others \cite{parrow08, petersNestmannGoltz13} introduced and
classified a set of general criteria
for encodability which are syntax-agnostic 
\cite{gorla10, petersNestmannGoltz13}: 
they are now commonly used for claiming expressiveness of a given
calculus, defining important features which a ``good encoding'' should
satisfy. These include \emph{compositionality} (homomorphism), 
\emph{name invariance} (bijectional renaming), 
sound and complete \emph{operational correspondence} (the source and
target can simulate each other), 
\emph{divergence reflection} (the target diverges only if the
source diverges), \emph{observability} (barb-sensitiveness), and \emph{distributability} preservation (the target has the same degree of distribution as the source). 
Conversely, a \emph{separation result}, \ie the proof of the absence of an encoding with certain criteria, shows that the source language can represent behaviours that can\emph{not} be expressed in the target. This paper gives a fresh look at
expressiveness of typed $\pi$-calculi,
focusing on choice constructs of \emph{session types}. 

Session types \cite{hondaVasconcelosKubo98,THK} specify and constrain the communication behaviour as a
protocol between components in a system. A session type system
excludes any non-conforming behaviour, statically preventing 
type and communication errors (i.e., mismatch of choice labels).
Several languages now have session-type support via libraries and
tools \cite{BETTYTOOLBOOK,DBLP:journals/ftpl/AnconaBB0CDGGGH16}. 
As the origin of session types is Linear Logic \cite{HondaK:typdyi}, 
traditional session types 
include only one-directional input (external) and output (internal)
guarded choices. To explore the full expressiveness of mixed choice
from the \piCal, 
recently Casal, Mordido, and Vasconcelos proposed 
the binary session types with mixed choices called \emph{mixed sessions}
\cite{casalMordidoVasconcelos22}.
We denote their calculus by \CMVmix. 
Mixed sessions include a mixture of branchings (labelled input choices) and 
selections (labelled output choices) 
at the same \emph{linear} channel or \emph{unrestricted} channel.
This extension gives us many useful and typable \emph{structured}
concurrent programming idioms which consist
of both unrestricted and linear non-deterministic choice behaviours.
We show that
in spite of its practical relevance,
 mixed sessions in \CMVmix 
are \emph{strictly less expressive} than
mixed choice in the \piCal
even with an unrestricted usage of choice channels.

This result surprised us.
We would have expected that using mixed choice with an unrestricted choice channel results into a choice construct comparable to choice in the \piCal.
But, as we show in the following, mixed choice in \CMVmix cannot express essential features of mixed choice in the \piCal.
First we observe that mixed sessions are not expressive enough to solve leader election in symmetric networks.
Remember that it was leader election in symmetric networks that was used to show that mixed choice is more expressive than separate choice in the \piCal (see \cite{palamidessi97}).
Second we observe that mixed sessions cannot express the synchronisation pattern \patternStar.
Synchronisation patterns were introduced in \cite{petersNestmannGoltz13} to capture the amount of synchronisation that can be expressed in distributed systems.
The synchronisation pattern \patternStar was identified in \cite{petersNestmannGoltz13} as capturing exactly the amount of synchronisation introduced with mixed choice in the \piCal.
Finally, we have a closer look at the encoding from \CMVmix into \CMV presented in \cite{casalMordidoVasconcelos22}.
\CMV is the variant of session types that is extended in \cite{casalMordidoVasconcelos22} with a mixed-choice-construct in order to obtain \CMVmix, \ie \CMV has traditional branching and selection but not their mix.
As it is the case for many variants of session types, \CMV can express separate choice but has no construct for mixed choice.
By analysing this encoding, we underpin our claim that mixed choice in \CMVmix is not more expressive than separate choice in the \piCal.

\begin{wrapfigure}{R}{0.35\textwidth}
	\centering
	\begin{tikzpicture}[bend angle=20]
		\node (pi) at (2, 2) {$ \pi $};
		\node (CMV) at (4, 0) {\CMV};
		\node (CMVmix) at (2, 0) {\CMVmix};
		\path[->, thick, color=darkgreen] (CMVmix) edge (CMV);
		\path[->, color=red, dashed] (pi) edge [bend right] node[left] {$ \mathsf{LE} $\;} node {$ \times $} (CMVmix);
		\path[->, color=red, dashed] (pi) edge [bend left] node[right] {\;\patternStar} node {$ \times $} (CMVmix);
	\end{tikzpicture}
\end{wrapfigure}

Our contributions are summarised in the picture on the right.
In \S~\ref{sec:separateMixedSessionsLeaderElection} we prove that there exists no good encoding from the \piCal (with mixed choice)
into \CMVmix, where we use the leader election problem by Palamidessi in \cite{palamidessi97} (\textcolor{red}{$ \mathsf{LE} $}) as distinguishing feature (the first \begin{tikzpicture}[baseline=-1mm] \path[->, dashed, color=red] (0, 0) edge node {$ \times $} (1, 0); \end{tikzpicture}).
In \S~\ref{sec:separateMixedSessionsFromPiSynchronisationPatterns} we reprove this result using the \emph{synchronisation pattern} \textcolor{red}{\patternStar} from \cite{petersNestmannGoltz13} instead as distinguishing feature (the second \begin{tikzpicture}[baseline=-1mm] \path[->, dashed, color=red] (0, 0) edge node {$ \times $} (1, 0); \end{tikzpicture}).
Then we prove soundness of the encoding presented in
\cite{casalMordidoVasconcelos22} closing their open problem in
\S~\ref{sec:encodeMixedSessions} (\begin{tikzpicture}[baseline=-1mm]
  \path[->, thick, color=darkgreen] (0, 0) edge (0.5, 0); \end{tikzpicture}).
By this encoding source terms in \CMVmix and their literal
translations in \CMV are related by \emph{coupled similarity} \cite{parrow1992}, \ie \CMVmix is encoded into \CMV up to coupled similarity.
From the separation results in \S~\ref{sec:separateMixedSessionsLeaderElection} and \S~\ref{sec:separateMixedSessionsFromPiSynchronisationPatterns} and the encoding into session types with separate choice in \S~\ref{sec:encodeMixedSessions} we conclude that \emph{mixed sessions in \cite{casalMordidoVasconcelos22} can express only separate choice}.

To make our paper readable, we focus on presenting our results with intuitive, self-contained explanations.
In particular, we simplify the languages \CMVmix and \CMV for the discussion in this paper and omit their type systems.
However, the proofs are carried out on the original definitions of the languages from \cite{casalMordidoVasconcelos22}. 
We include complete proofs of all the statements in this paper and the technical details of the notions from the literature that we use in \cite{petersYoshidaTecRep22}.


\section{Technical Preliminaries: Mixed Sessions and Encodability Criteria}
\label{sec:preliminaries}

This section gives a summary of the \piCal, \CMVmix, \CMV, and encodability criteria.

Assume a countably-infinite set $ \names $ of \emph{names}.
For the \piCal we additionally assume a set $ \Set{ \overline{y} \mid y \in \names } $ of \emph{co-names}.
Let $ \tau \notin \names \cup \Set{ \overline{y} \mid y \in \names } $.

The \emph{syntax} of a process calculus is usually defined by a context-free grammar defining operators.
We use $ P, Q, \ldots $ to range over process terms.
The arguments of a term $ P $ that are again process terms are called \emph{subterms} of $ P $.
Terms that appear as subterm underneath some (action) prefix are called \emph{guarded}, because the guarded subterm cannot be executed before the guarding action has been performed.
Also conditionals, such as if-then-else-constructs, guard their respective subterms.
\emph{Expressions} are constructed from variables, unit, and standard boolean operators.
We assume an evaluation function $ \Eval{\cdot} $ that evaluates expressions to \emph{values}.

We assume that the \emph{semantics} is given as an \emph{operational semantics} consisting of inference rules defined on the operators of the language \cite{Plotkin04}. For many process calculi, the semantics is provided in two forms, as \emph{reduction semantics} and as \emph{labelled transition semantics}. We assume that at least the reduction semantics $ \step $ is given as part of the definition, because its treatment is easier in the context of encodings.
A \emph{(reduction) step} is written as $ P \step P' $. If $ P \step P' $, then $ P' $ is called \emph{derivative} of $ P $. Let $ P \step $ (or $ P \noStep $) denote the existence (absence) of a step from $ P $, and let $ \steps $ denote the reflexive and transitive closure of $ \step $. A sequence of reduction steps is called a \emph{reduction}.
We write $ P \step^{\omega} $ if $ P $ has an infinite sequence of steps.
We also use \emph{execution} to refer to a reduction starting from a particular term.
A process that cannot reduce is called \emph{stuck}.

The application $ P\sigma $ of a substitution $ \sigma = \Set{ \Subst{y_1}{x_1}, \ldots, \Subst{y_n}{x_n} } $ on a term is defined as the result of simultaneously replacing all free occurrences of $ x_i $ by $ y_i $ for $ i \in \Set{ 1, \ldots, n } $, possibly applying $ \alpha $-conversion to avoid capture or name clashes.
For all names in $ \names \setminus \Set{ x_1, \ldots, x_n } $ the substitution behaves as the identity mapping.

\subsection{Process and Session Calculi}

The \piCal was introduced by Milner, Parrow, and Walker in \cite{MilnerParrowWalker92} and is one of the most well-known process calculi.
We consider a variant of the \piCal with mixed guarded choice and replication but without matching (as in \cite{palamidessi97}).
This variant is often called the synchronous or full \piCal.
\emph{Mixed sessions} are a variant of binary session types introduced by Casal, Mordido, and Vasconcelos in \cite{casalMordidoVasconcelos22} with a choice construct that combines prefixes for sending and receiving.
We denote this language as \CMVmix.
\CMV is the session type variant on which \CMVmix is based.
A central idea of \CMVmix (and \CMV) is that channels are separated in two \emph{channel endpoints} and that interaction is by two processes acting on the respective different ends of such a channel.

The syntax of the \piCal, \CMVmix, and \CMV is given as:
\begin{align*}
	\procPi\text{: } & P \deffTerms \sum_{i \in \indexSet} \alpha_i.P_i \sepTerms \ResPi{x}{P} \sepTerms P \mid P \sepTerms !P
	\hspace{2em} \alpha \deffTerms \InpPi{y}{x} \sepTerms \OutPi{y}{z} \sepTerms \tau\\
	\procCMVmix\text{: } & P \deffTerms y \, \sum_{i \in \indexSet}M_i \sepTerms P \mid P \sepTerms \ResCMVmix{y}{z}{P} \sepTerms \ConditionalCMVmix{v}{P}{P} \sepTerms \inactCMVmix
	\hspace{2em} M \deffTerms \BranchCMVmix{\Label}{*}{v}{P}
	\hspace{2em} * \deffTerms ! \sepTerms ?\\
	\procCMV\text{: } & P \deffTerms \OutCMV{y}{v}{P} \sepTerms {y}{?}{x}{P} \sepTerms \SelCMV{x}{\Label}{P} \sepTerms \BranCMV{x}{\Set{\BranchCMV{\Label_i}{P_i}}_{i \in \indexSet}} \sepTerms P \mid P \sepTerms \ResCMVmix{y}{z}{P} \sepTerms \ConditionalCMVmix{v}{P}{P} \sepTerms \inactCMVmix
\end{align*}

\noindent
A choice $ \sum_{i \in \indexSet} \alpha_i.P_i $ in $ \procPi $ offers for each $ i $ in the index set $ \indexSet $ a subterm guarded by some action prefix $ \alpha_i $, where $ \alpha_i $ is an input action $ \InpPi{y}{x} $, an output action $ \OutPi{y}{z} $, or the internal action $ \tau $.
In contrast choice $ y \, \sum_{i \in \indexSet}M_i $ in \CMVmix operates on a single channel endpoint.
In \cite{casalMordidoVasconcelos22} a choice is declared as either linear ($ \linCMVmix $) or unrestricted ($ \unCMVmix $), where the latter introduces recursion in the calculus.
We omit these qualifiers to simplify the presentation.
However, the proofs are carried out on the languages \CMVmix and \CMV as given by \cite{casalMordidoVasconcelos22} (see \cite{petersYoshidaTecRep22}).
A branch $ \BranchCMVmix{\Label}{*}{v}{P} $ specifies a label $ \Label $, a polarity $ * $ ($ ! $ for sending or $ ? $ for receiving), a value in output actions or a variable for input actions, and a continuation $ P $.
We abbreviate the empty sum by $ \inactCMVmix$ and 
we often separate summands by $ + $.
The remaining are: restriction $ \ResPi{x}{P} $ and $ \ResCMV{y}{z}{P} $, parallel composition $ P \mid P $, replication $ !P $, conditional $ \ConditionalCMVmix{v}{P}{P} $, output $ \OutCMV{y}{v}{P} $, input $ {y}{?}{x}{P} $, selection $ \SelCMV{x}{\Label}{P} $, and branching $ \BranCMV{x}{\Set{\BranchCMV{\Label_i}{P_i}}_{i \in \indexSet}} $.

The semantics of the languages is given by the axioms:
\begin{displaymath}\begin{array}{lc}
	\pi\text{: } & \OutPi{y}{z}.P + R \mid \InpPi{y}{x}.Q + N \stepPi P \mid Q\Set{\Subst{z}{x}} \hspace{2em} \tau.P + R \stepPi P
	\vspace{0.5em}\\
	\CMVmix/\CMV\text{: } & \ConditionalCMVmix{\true}{P}{Q} \stepCMVmix P \hspace{2em} \ConditionalCMVmix{\false}{P}{Q} \stepCMVmix Q
	\vspace{0.5em}\\
	\CMVmix\text{: } & \ResCMVmix{y}{z}{{\left( y \, {\left( \OutCMVmix{\Label}{v}{P} + M \right)} \mid z \, {\left( \InpCMVmix{\Label}{x}{Q} + N \right)} \mid R \right)}} \stepCMVmix \ResCMVmix{y}{z}{{\left( P \mid Q\Set{\Subst{v}{x}} \mid R \right)}}
	\vspace{0.5em}\\
	\CMV\text{: } & \ResCMV{y}{z}{{\left( \OutCMV{y}{v}{P} \mid {z}{?}{x}{Q} \mid R \right)}} \stepCMV \ResCMV{y}{z}{{\left( P \mid Q\Set{\Subst{v}{x}} \mid R \right)}}
	\vspace{0.5em}\\
	& \ResCMV{y}{z}{{\left( \SelCMV{y}{\Label_j}{P} \mid \BranCMV{z}{\Set{\BranchCMV{\Label_i}{Q_i}}_{i \in \indexSet}} \mid R \right)}} \stepCMV \ResCMV{y}{z}{{\left( P \mid Q_j \mid R \right)}}
\end{array}\end{displaymath}
and all three calculi share the following rules:
\begin{displaymath}\begin{array}{c}
	\dfrac{P \stepPi P'}{P \mid Q \stepPi P' \mid Q}
	\hspace{2em}
	\dfrac{P \stepPi P'}{\ResPi{\hat{x}}{P} \stepPi \ResPi{\hat{x}}{P'}}
	\hspace{2em}
	\dfrac{P \scPi Q \quad Q \stepPi Q' \quad Q' \scPi P'}{P \stepPi P'}
\end{array}\end{displaymath}
where $ \hat{x} $ consists of either one or two names; and $\scPi$ denotes the standard \emph{structural congruence} from \cite{casalMordidoVasconcelos22} plus a rule for replication in the \piCal.
More precisely, structural congruence is defined as the least congruence that contains $\alpha$-conversion and satisfies the rules:
\begin{displaymath}\begin{array}{c}
	\ResPi{\hat{x}}{\inactPi} \scPi \inactPi
	\hspace{2em}
	\ResPi{\hat{x}}{\ResPi{\hat{y}}{P}} \scPi \ResPi{\hat{y}}{\ResPi{\hat{x}}{P}}
	\hspace{2em}
	P \mid \ResPi{\hat{x}}{Q} \scPi \ResPi{\hat{x}}{{\left( P \mid Q \right)}} \quad \text{if } \Set{\hat{x}} \cap \FreeNames{P} = \emptyset
	\vspace{0.5em}\\
	\ResCMVmix{y}{z}{P} \scCMVmix \ResCMVmix{z}{y}{P}
	\hspace{1em}
	P \mid \inactPi \scPi P
	\hspace{1em}
	P \mid Q \scPi Q \mid P
	\hspace{1em}
	P \mid {\left( Q \mid R \right)} \scPi {\left( P \mid Q \right)} \mid R
	\hspace{1em}
	!P \scPi P \mid {!P}
\end{array}\end{displaymath}

The name $ x $ is bound in $ P $ by $ \InpPi{y}{x}.P $, $ \BranchCMVmix{\Label}{?}{x}{P} $, and $ \ResPi{x}{P} $, $ \ResCMVmix{x}{y}{P} $, or $ \ResCMVmix{y}{x}{P} $. All other names are free.
We often omit $ \inactPi $ and 
the argument of action prefixes if it is irrelevant, \ie we write $ y.P $ instead of $ \InpPi{y}{x}.P $ if $ x \notin \FreeNames{P} $; and $ \overline{y}.P $ instead of $ \OutPi{y}{z}.P $ if for all matching receivers $ \InpPi{y}{x}.Q $ we have $ x \notin \FreeNames{Q}$.

A process $ P $ emits a barb $ \overline{y} $, denoted as 
$ P\Barb{\overline{y}} $, if $ P $ (in the \piCal) has an unguarded output $ \OutPi{y}{z}.P' $ on a free channel $ y \in \FreeNames{P} $.
Similarly, $ P $ has a barb $ y $, denoted as $ P\Barb{y} $, if $ P $ (in the \piCal) has an unguarded input $ \InpPi{y}{x}.P' $ or if $ P $ (in \CMVmix/\CMV) has an unguarded choice $ \ChoiceCMVmix{}{y}{\sum_{i \in \indexSet}M_i} $, output $ \OutCMV{y}{v}{P} $, input $ \InpCMV{}{y}{x}{P} $, selection $ \SelCMV{y}{\Label}{P} $, or branching $ \BranCMV{y}{\Set{\BranchCMV{\Label_i}{P_i}}_{i \in \indexSet}} $ on a free $ y \in \FreeNames{P} $.
In \CMVmix and \CMV we do not distinguish between input barbs $ \Barb{y} $ and output barbs $ \Barb{\overline{y}} $ but instead have barbs on different channel end points.
The term $ P $ reaches a barb $ \beta $ (with $ \beta = y $ or $ \beta = \overline{y} $), denoted as $ P\WeakBarb{\beta} $, if there is some $ P' $ such that $ P \steps P' $ and $ P'\Barb{\beta} $.

The type systems of \CMVmix and \CMV in \cite{casalMordidoVasconcelos22} ensure that the two endpoints of a channel are dual, \eg if one endpoint sends the other has to receive.
For the expressiveness results on the choice construct proved in this paper, the type system is not crucial.
Indeed, our separation result (in both ways to prove it) is carried out on the untyped version of $ \CMVmix $.
See \cite{casalMordidoVasconcelos22} or \cite{petersYoshidaTecRep22} for the full typing systems.

Two terms of a language are usually compared using some kind of a behavioural simulation relation.
The most commonly known behavioural simulation relation is bisimulation.
A relation $ \mathcal{R} $ is a bisimulation if any two related processes mutually simulate their respective sequences of steps, such that the derivatives are again related.

\newpage
\begin{definition}[Bisimulation]
	$ \mathcal{R} $ is a (weak reduction, barbed) bisimulation if for each $ {\left( P, Q \right)} \in \mathcal{R} $:
	\begin{itemize}
		\item $ P \steps P' $ implies $ \exists Q'\logdot Q \steps Q' \wedge {\left( P', Q' \right)} \in \mathcal{R} $
		\item $ Q \steps Q' $ implies $ \exists P'\logdot P \steps P' \wedge {\left( P', Q' \right)} \in \mathcal{R} $
		\item $ P\WeakBarb{\beta} $ iff $ Q\WeakBarb{\beta} $ for all barbs $ \beta $
	\end{itemize}
	Two terms are bisimilar if there exists a bisimulation that relates them.
	For a language $ \lang $, let $ \approx_{\lang} $ denote bisimilarity on $ \lang $.
\end{definition}

Another interesting behavioural simulation relation is coupled similarity.
It was introduced in \cite{parrow1992} and discussed \eg in \cite{bispingNestmannPeters19}.
It is strictly weaker than bisimilarity.
As pointed out in \cite{parrow1992}, in contrast to bisimilarity it essentially allows for intermediate states (see \S~\ref{sec:encodeMixedSessions}).
Each symmetric coupled simulation is a bisimulation.

\begin{definition}[Coupled Simulation]
	A relation $ \mathcal{R} $ is a (weak reduction, barbed) coupled simulation if for each $ {\left( P, Q \right)} \in \mathcal{R} $:
	\begin{itemize}
		\item $ P \steps P' $ implies $ \exists Q'\logdot Q \steps Q' \wedge {\left( P', Q' \right)} \in \mathcal{R} $
		\item $ P \steps P' $ also implies $ \exists Q'\logdot Q \steps Q' \wedge {\left( Q', P' \right)} \in
\mathcal{R} $ 
		\item $ P\WeakBarb{\beta} $ implies $ Q\WeakBarb{\beta} $ for all barbs $ \beta $
	\end{itemize}
	Two terms are coupled similar if they are related by a coupled simulation in both directions.
\end{definition}

For all languages considered here, a process $ P $ is distributable into $ P_1, \ldots, P_n $ if and only if we have $ P \equiv \ResPi{\tilde{y}}{\left( P_1 \mid \ldots \mid P_n \right)} $ (compare to the notion of a standard form of the \piCal in \cite{Milner1999} and the discussion on distributability in \cite{petersNestmannGoltz13}).
Moreover, two steps $ a: P \step P_a $ and $ b : P \step P_b $ of $ P $ are in conflict if one disables the other, \ie if $ a $ and $ b $ compete for some action prefix.
More precisely, two steps in the \piCal are in conflict if they reduce the same choice, two steps in \CMVmix are in conflict if they reduce the same choice or the same conditional, and two steps in \CMV are are in conflict if the reduce the same output, input, selection prefix, branching prefix, or conditional.
Note that reducing the same choice not necessarily means to reduce the same summand in this choice.
The steps $ a $ and $ b $ are distributable, if $ P $ is distributable into at least two parts such that one part performs the step $ a $ and the other part performs $ b $.

\subsection{Encodability Criteria}

An encoding $\arbitraryEncoding $ is a function from the processes of the source language into the processes of the target language, where we need encodability criteria to rule out trivial or meaningless encodings.
In order to provide a general framework, Gorla in \cite{gorla10} suggests five criteria well suited for language comparison.
Other frameworks were introduced \eg in \cite{fu16, glabbeek18}.
We replace success sensitiveness of \cite{gorla10} by the stricter
barb-sensitiveness, because it is more intuitive.
As we claim, all separation results in this paper remain valid \wrt success sensitiveness instead of barb-sensitiveness.

The papers \cite{casalMordidoVasconcelos22} and \cite{palamidessi97} require as additional criterion that the parallel operator is translated homomorphically.
To strengthen the separation results, we use the slightly weaker criterion `preservation of distributability' (see \cite{petersNestmann12, peters12}).
The encoding of \cite{casalMordidoVasconcelos22} that we discuss in \S~\ref{sec:encodeMixedSessions} translates the parallel operator homomorphically.

\newpage

\begin{definition}[Good Encoding]
	\label{def:goodEncodingA}
	We consider an encoding $ \arbitraryEncoding $ to be \emph{good} if it is
	\begin{description}
		\item[compositional:] The translation of an operator is captured by a context that takes as arguments the translations of the subterms of the operator.
		\item[name invariant:] For every $ S $ and every substitution $ \sigma $, it holds that $ \ArbitraryEncoding{S\sigma} \asymp \ArbitraryEncoding{S}\sigma $.
		\item[operationally complete:] For all $ S \stepsSource S' $, it holds $ \ArbitraryEncoding{S} \stepsTarget \asymp \ArbitraryEncoding{S'} $.
		\item[operationally sound:] For all $ \ArbitraryEncoding{S} \stepsTarget T $, there is an $ S' $ s.t.\ $ S \stepsSource S' $ and $ T \stepsTarget \asymp \ArbitraryEncoding{S'} $.
		\item[divergence reflecting:] For every $ S $, $ \ArbitraryEncoding{S} \stepTarget^{\omega} $ implies $ S \stepSource^{\omega} $.
		\item[barb-sensitive:] For every $ S $ and every barb $ y $, $ S\WeakBarb{y} $ iff $ \ArbitraryEncoding{S}\WeakBarb{y} $.
		\item[distributability preserving:] For every $ S \in \procSource $ and for all terms $ S_1, \ldots, S_n \in \procSource $ that are distributable within $ S $ there are some $ T_1, \ldots, T_n \in \procTarget $ that are distributable within $ \ArbitraryEncoding{S} $ such that $ T_i \asymp \ArbitraryEncoding{S_i} $ for all $ 1 \leq i \leq n $.
	\end{description}
	Moreover the equivalence $ \asymp $ is a barb respecting (weak) reduction bisimulation.
\end{definition}

Operational correspondence is the combination of operational \emph{completeness} and \emph{soundness}.
Since we are focusing on separation results on untyped languages, we do not require an explicit criterion for types.
In \cite{casalMordidoVasconcelos22} the encoding from \CMVmix into \CMV is proven to be type sound.


\section{Separating Mixed Sessions and the Pi-Calculus via Leader Election}
\label{sec:separateMixedSessionsLeaderElection}

The first expressiveness result on the \piCal that focuses on mixed choice is the separation result by Palamidessi in \cite{palamidessi97, palamidessi03}.
This result uses the problem of leader election in symmetric networks as distinguishing feature.

Following \cite{palamidessi97} we assume that the set of names $ \names $ contains names that identify the processes of the network and that are never used as bound names within electoral systems.
For simplicity, we use natural numbers for this kind of names.
A leader is announced by unguarding an output on its id.
Then a network $ P = \ResPi{\tilde{x}}{{\left( P_1 \mid \ldots \mid P_k \right)}} $ in $ \procPi $ or $ P = \ResCMVmix{\tilde{x}}{\tilde{y}}{{\left( P_1 \mid \ldots \mid P_k \right)}} $ in $ \procCMVmix $ is an \emph{electoral system} if in every maximal execution exactly one leader is announced.
We adapt the definition of electoral systems of \cite{palamidessi97} to obtain electoral systems in the \piCal and in \CMVmix.

\begin{definition}[Electoral System]
	A network $ P = \ResPi{\tilde{x}}{{\left( P_1 \mid \ldots \mid P_k \right)}} $ in $ \procPi $ or $ P = \ResCMVmix{\tilde{x}}{\tilde{y}}{{\left( P_1 \mid \ldots \mid P_k \right)}} $ in $ \procCMVmix $ is an \emph{electoral system} if for every execution $ E: P \steps P' $ there exists an extension $ E': P \steps P' \steps P'' $ and some $ n \in \Set{ 1, \ldots, k } $ (the leader) such that $ P'''\Barb{n} $ for all $ P''' $ with $ P'' \steps P''' $, but $ P''\NotWeakBarb{m} $ for any $ m \in \Set{1, \ldots, k} $ with $ m \neq n $.
\end{definition}

Accordingly, an electoral system in the \piCal announces a leader by unguarding some output on $ n $ that cannot be reduced or removed, where $ n $ is the id of the leader.
In \CMVmix a leader is announced by unguarding a choice on the channel $ n $.
Since $ n $ is free this choice cannot be removed.
A network is an electoral system if in every maximal execution exactly one leader $ n $ is announced.

We adapt the definition of hypergraphs that are associated to a network of processes in the \piCal defined in \cite{palamidessi97} to networks in \CMVmix.
The hypergraph connects the nodes $ 1, \ldots, k $ of the network by edges representing the free channels that they share, where we ignore the outer restrictions of the network.

\begin{definition}[Hypergraph]
	Given a network $ P = \ResPi{\tilde{x}}{{\left( P_1 \mid \ldots \mid P_k \right)}} $ in $ \procPi $ or $ P = \ResCMVmix{\tilde{x}}{\tilde{y}}{{\left( P_1 \mid \ldots \mid P_k \right)}} $ in $ \procCMVmix $, the \emph{hypergraph} associated to $ P $ is $ \Hypergraph{P} = \Tuple{N, X, t} $ with $ N = \Set{ 1, \ldots, k } $, $ X = \FreeNames{P_1 \mid \ldots \mid P_n} \setminus N $, and $ t(x) = \Set{n \mid x \in \FreeNames{P_n}} $ for each $ x \in X $.
\end{definition}

Because we ignore the outer restrictions of the network in the above definition, the hypergraphs of two structural congruent networks may be different.
However, this is not crucial for our results.

Given a hypergraph $ H = \Tuple{N, X, t} $, an automorphism on $ H $ is a pair $ \sigma = \Tuple{\sigma_N, \sigma_X} $ such that $ \sigma_N: N \to N $ and $ \sigma_X: X \to X $ are permutations which preserve the type of arcs.
For simplicity, we usually do not distinguish between $ \sigma_N $ and $ \sigma_X $ and simply write $ \sigma $.
Moreover, since $ \sigma $ is a substitution, we allow to apply $ \sigma $ on terms $ P $, denoted as $ P\sigma $.
The orbit $ \Orbit{\sigma}{n} $ of $ n \in N $ generated by $ \sigma $ is defined as the set of nodes in which the various iterations of $ \sigma $ map $ n $, \ie $ \Orbit{\sigma}{n} = \Set{ n, \sigma(n), \ldots, \sigma^{h - 1}(n) } $, where $ \sigma^i $ represents the composition of $ \sigma $ with itself $ i $ times and $ \sigma^h = \id $.
We also adapt the notion of a symmetric system of \cite{palamidessi97} to obtain symmetric systems in the \piCal as well as in \CMVmix.

\begin{definition}[Symmetric System]
	Consider a network $ P = \ResPi{\tilde{x}}{{\left( P_1 \mid \ldots \mid P_k \right)}} $ in $ \procPi $ or a network $ P = \ResCMVmix{\tilde{x}}{\tilde{y}}{{\left( P_1 \mid \ldots \mid P_k \right)}} $ in $ \procCMVmix $, and let $ \sigma $ be an isomorphism on its associated hypergraph $ \Hypergraph{P} = \Tuple{N, X, t} $. $ P $ is \emph{symmetric \wrt $ \sigma $} iff $ P_{\sigma(i)} \approx_{\pi} P_i\sigma $ or $ P_{\sigma(i)} \approx_{\CMVmix} P_i\sigma $ for each node $ i \in N $. $ P $ is \emph{symmetric} if it is symmetric \wrt all the automorphisms of $ \Hypergraph{P} $.
\end{definition}

\noindent
In contrast to \cite{palamidessi97} we use bisimilarity---$ \approx_{\pi} $ and $ \approx_{\CMVmix} $---instead of alpha conversion in the definition of symmetry.
With this weaker notion of symmetry, we compensate for the weaker criterion on distributability that we use instead of the homomorphic translation of the parallel operator.
Accordingly, we also consider networks as symmetric if they behave in a symmetric way; they do not necessarily need to be structurally symmetric.

In the \piCal we find symmetric electoral systems for many kinds of hypergraphs.
We use such a solution of leader election in a network with five nodes as counterexample to separate \CMVmix from the \piCal.

\begin{example}[Leader Election in the \PiCal]
	\label{exa:leaderElectionPi}
	Consider the network
	\begin{align*}
		\LEPi &= \ResPi{a, b, c, d, e, v, w, x, y, z}{\left( S_1 \mid S_2 \mid S_3 \mid S_4 \mid S_5 \right)} 
	\end{align*}
	where $ S_1 = \overline{e} + a.{\left( \overline{x} + v.\overline{1} \right)} $, $ S_2 = \overline{a} + b.{\left( \overline{y} + w.\overline{2} \right)} $, $ S_3 = \overline{b} + c.{\left( \overline{z} + x.\overline{3} \right)} $, $ S_4 = \overline{c} + d.{\left( \overline{v} + y.\overline{4} \right)} $, and $ S_5 = \overline{d} + e.{\left( \overline{w} + z.\overline{5} \right)} $.
	\qed
\end{example}

\begin{wrapfigure}{R}{0.275\textwidth}
	\centering
	\scalebox{0.8}{
	\begin{tikzpicture}[bend angle=20]
		\foreach \v/\x/\y/\z in {2/1/a/v,1/2/b/w,5/3/c/x,4/4/d/y,3/5/e/z}
        {
            \path (360*\v/5-55:1.75) node[draw, inner sep=0pt, minimum size=3pt] (p\x) {$ \begin{array}{c} \x\\ \textcolor{blue}{\y} \;\; \textcolor{red}{\z} \end{array} $};
        }
        \draw[-latex, color=blue] (p1) edge [bend right] node[above] {$ \overline{e} $} (p5);
        \draw[-latex, color=blue] (p2) edge [bend right] node[above] {$ \overline{a} $} (p1);
        \draw[-latex, color=blue] (p3) edge [bend right] node[right] {$ \overline{b} $} (p2);
        \draw[-latex, color=blue] (p4) edge [bend right] node[below] {$ \overline{c} $} (p3);
        \draw[-latex, color=blue] (p5) edge [bend right] node[left] {$ \overline{d} $} (p4);
        \draw[-latex, color=red] (p1) edge node[pos=0.1, right] {$ \overline{x} $} (p3);
        \draw[-latex, color=red] (p2) edge node[pos=0.1, below] {$ \overline{y} $} (p4);
        \draw[-latex, color=red] (p3) edge node[pos=0.1, below] {$ \overline{z} $} (p5);
        \draw[-latex, color=red] (p4) edge node[pos=0.1, left] {$ \overline{v} $} (p1);
        \draw[-latex, color=red] (p5) edge node[pos=0.1, above] {$ \overline{w} $} (p2);
	\end{tikzpicture}
	}
\end{wrapfigure}

$ \LEPi $ is symmetric.
Consider \eg the permutation $ \sigma $ that permutes the channels as follows: $ a \rightarrow b \rightarrow c \rightarrow d \rightarrow e \rightarrow a $, $ v \rightarrow w \rightarrow x \rightarrow y \rightarrow z \rightarrow v $, and $ 1 \rightarrow 2 \rightarrow 3 \rightarrow 4 \rightarrow 5 \rightarrow 1 $.
Then $ S_{\sigma(i)} = S_i\sigma $ for all $ i \in \Set{ 1, \ldots, 5 } $.
The network elects a leader in two stages.
The first stage (depicted as \textcolor{blue}{blue circle}) uses mixed choices on the channels $ a, b, c, d, e $; in the second stage (depicted as a \textcolor{red}{red star}) we have mixed choices on the channels $ v, w, x, y, z $.
The picture on the right gives $ \Hypergraph{\LEPi} $ extended by arrow heads to visualise the direction of interactions and the respective action prefixes.
The senders in the two stages are losing the leader election game, \ie are not becoming the leader.
In the first stage two processes can be receivers and continue with the second stage.
The process that is neither sender nor receiver in the first stage is stuck and also loses.
The receiver of the second stage then becomes the leader by unguarding an output on its id.
For instance we obtain the execution
\begin{align*}
	\LEPi &\stepPi \ResPi{\tilde{n}}{{\left( \overline{x} + v.\overline{1} \mid S_3 \mid S_4 \mid S_5 \right)}} \stepPi \ResPi{\tilde{n}}{{\left( \overline{x} + v.\overline{1} \mid \overline{z} + x.\overline{3} \mid S_5 \right)}} \stepPi \overline{3} \mid \ResPi{\tilde{n}}{S_5} \noStep
\end{align*}
with $ \tilde{n} = a, b, c, d, e, v, w, x, y, z $ by reducing on the channels $ a $ and $ c $ in the first stage.
The network $ \LEPi $ has 10 maximal executions (modulo structural congruence) that are obtained from the above execution by symmetry on the first two steps.
In each maximal execution exactly one leader is elected.

We show that there exists no symmetric electoral system for networks of
size five in \CMVmix; or more generally no symmetric electoral system
for networks of odd size in \CMVmix.
A key ingredient to separate the \piCal with mixed choice from the asynchronous \piCal in \cite{palamidessi97} is a confluence lemma.
It states that in the asynchronous \piCal a step reducing an output and an alternative step reducing an input cannot be conflict to each other and thus can be executed in any order.
In the full \piCal this confluence lemma is not valid, because inputs and outputs can be combined within a single choice construct and can thus be in conflict.
For \CMVmix we observe that steps that reduce different endpoints can also not be in conflict to each other, because different channel endpoints cannot be combined in a single choice.

\begin{wrapfigure}{R}{0.2\textwidth}
	\centering
	\scalebox{0.8}{
	\begin{tikzpicture}[bend angle=20]
		\node (a) at (0, 0.75) {$ A $};
		\node (b) at (1.5, 1.5) {$ B $};
		\node (c) at (1.5, 0) {$ C $};
		\node (d) at (3, 0.75) {$ D $};
		\path[|->] (a) edge (b);
		\path[|->] (a) edge (c);
		\path[|->] (b) edge (d);
		\path[|->] (c) edge (d);
	\end{tikzpicture}
	}
\end{wrapfigure}

\begin{lemma}[Confluence]
	\label{lem:confluenceCMVmix}
	Let $ P, Q \in \procCMVmix $.
	Assume that $ A = \ResCMVmix{\tilde{x}}{\tilde{y}}{{\left( P \mid Q \right)}} $ can make two steps $ A \step \ResCMVmix{\widetilde{x_1}}{\widetilde{y_1}}{{\left( P_1 \mid Q_1 \right)}} = B $ and $ A \step \ResCMVmix{\widetilde{x_2}}{\widetilde{y_2}}{{\left( P_2 \mid Q_2 \right)}} = C $ such that $ P_1 $ is obtained modulo $ \scCMVmix $ from $ P $ by reducing a choice on channel endpoint $ a $ and $ P_2 $ is obtained modulo $ \scCMVmix $ from $ P $ by reducing a choice on channel endpoint $ b $ with $ a \neq b $.
	Then there exist $ P_3, Q_3 \in \procCMVmix $ and $ D = \ResCMVmix{\widetilde{x_3}}{\widetilde{y_3}}{{\left( P_3 \mid Q_3 \right)}} $ such that $ B \step D $ and $ C \step D $, where $ \widetilde{x_3} = \widetilde{x_1} \cup \widetilde{x_2} $ and $ \widetilde{y_3} = \widetilde{y_1} \cup \widetilde{y_2} $.
\end{lemma}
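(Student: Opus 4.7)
The plan is to decompose $A$ via structural congruence so that both reducing pairs of choices appear in parallel at the top level, and then to observe that the two reductions act on disjoint syntactic subterms and therefore commute.

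First, I would unfold each reduction using the single communication axiom of \CMVmix. Tracing the inference of $A \step B$ through the $\scCMVmix$ and context rules yields a representation
\[
A \scCMVmix \ResCMVmix{\widetilde{x_1}}{\widetilde{y_1}}{\left( a \, {\left( \OutCMVmix{\Label_1}{v_1}{P_1'} + M_a \right)} \mid a' \, {\left( \InpCMVmix{\Label_1}{w_1}{Q_1'} + M_{a'} \right)} \mid R_1 \right)},
\]
where $a'$ denotes the endpoint dual to $a$ (bound by one of the pairs in $\widetilde{x_1}, \widetilde{y_1}$ unless free in $A$), the two displayed choices are the ones consumed by the step, and $B \scCMVmix \ResCMVmix{\widetilde{x_1}}{\widetilde{y_1}}{( P_1' \mid Q_1' \Set{ \Subst{v_1}{w_1} } \mid R_1 )}$. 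Analogously I extract the corresponding data for $A \step C$ on $b$ and its dual $b'$, with summands, continuations $P_2', Q_2'$, and residue $R_2$ under binders $\widetilde{x_2}, \widetilde{y_2}$.

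Next, I would argue that the four choice occurrences involved in the two reductions are pairwise distinct. Since $a \neq b$ and every \CMVmix-choice sits on a unique endpoint, the driver of step~1 (on $a$) and of step~2 (on $b$) are syntactically different. The simultaneous availability of both steps from $A$ further forces that the dual partners of the two steps do not coincide with the pair of the other step: a single communication consumes a whole choice, so if the two steps shared a choice occurrence, firing one would disable the other and contradict the assumed availability of both from $A$. I can therefore merge the two decompositions into a joint form
\[
A \scCMVmix \ResCMVmix{\widetilde{x_3}}{\widetilde{y_3}}{\left( a \, C_a \mid a' \, C_{a'} \mid b \, C_b \mid b' \, C_{b'} \mid R \right)},
\]
with $\widetilde{x_3} = \widetilde{x_1} \cup \widetilde{x_2}$, $\widetilde{y_3} = \widetilde{y_1} \cup \widetilde{y_2}$, and each displayed choice still containing the summand required by its original step.

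Finally, applying the communication axiom to the first reducing pair from this joint form reaches $B$ and then applying it to the second pair reaches a term $D$; applying in the reverse order reaches the same $D$ modulo $\scCMVmix$, because each rewrite is strictly local to its own pair, and the continuations $P_i', Q_i'$ substituted by one step do not modify the top-level prefixes reduced by the other. This yields $B \step D$ and $C \step D$ with the claimed restriction sets. The main obstacle will be the overlap case $b = a'$ (and consequently $b' = a$), where both steps act on the same two endpoints; there I must verify that $A$ contains two distinct unguarded reducing pairs on that channel in parallel rather than a single pair whose different summands merely account for the two steps. This reduces, as above, to the observation that a \CMVmix-step consumes the whole choice, so two distinct steps that act on overlapping endpoints require distinct unguarded choice occurrences in $A$; the remaining work is the routine bookkeeping of scope extrusion through $\widetilde{x_1}, \widetilde{x_2}, \widetilde{y_1}, \widetilde{y_2}$.
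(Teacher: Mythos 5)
Your overall route coincides with the paper's: both arguments reduce the lemma to showing that the two redexes occupy pairwise distinct (hence distributable) positions in $A$, after which the two applications of the communication axiom commute because each rewrite is local to its own pair of choices. The $\scCMVmix$-decomposition, the merged joint form, and the final commutation with $\widetilde{x_3} = \widetilde{x_1} \cup \widetilde{x_2}$, $\widetilde{y_3} = \widetilde{y_1} \cup \widetilde{y_2}$ are all fine as routine steps.

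The gap is in how you establish the distinctness of the four choice occurrences, which is the entire content of the lemma. You argue that ``if the two steps shared a choice occurrence, firing one would disable the other and contradict the assumed availability of both from $A$.'' That is circular: the hypothesis only says both steps are enabled \emph{at} $A$, and two conflicting steps are, by definition, both enabled at $A$ while each disables the other. \CMVmix exhibits exactly this situation---in the paper's \patternM example the steps $a$ and $b$ share a choice occurrence on $x$ and are nevertheless both available from the initial term. For the same reason your fallback for the overlap case $b = a'$ (``two distinct steps that act on overlapping endpoints require distinct unguarded choice occurrences'') is false in general: a single pair of dual choices with two matching summands yields two distinct, conflicting steps from one pair of occurrences. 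What actually rules out sharing is purely positional and uses the precise shape of the hypothesis: a \CMVmix choice is syntactically anchored to exactly one endpoint, and the hypothesis places exactly one reduced choice of each step inside $P$, on endpoints $a \neq b$, so these two occurrences differ; the respective communication partners then lie in $Q$, on the dual endpoints $a'$ and $b'$, which are distinct because duality is injective, and an occurrence in $P$ can never coincide with one in $Q$. This reading---one redex choice of each step in $P$, its partner in $Q$---is essential: if both choices of one redex were allowed to sit inside $P$, the statement would fail (let $P$ contain an unguarded choice on $a$ and one on its dual $a'$, and $Q$ another choice on $a$; the internal step of $P$ and the cross step on $a'$ reduce choices on different endpoints of $P$ yet conflict on the $a'$-occurrence). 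Replace the availability argument by this structural one and the proof closes.
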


The proof of this confluence lemma relies on the observation that the two steps of $ A $ to $ B $ and $ C $ have to reduce distributable parts of $ A $.
Then these two steps are distributable, which in turn allows us to perform them in any order.
Thus the expressive power of choice in \CMVmix is limited by the fact that syntactically the choice construct is fixed on a single channel endpoint.
With this alternative confluence lemma, we can show that there is no electoral system of odd degree in \CMVmix.

\begin{lemma}[No Electoral System]
	Consider a network $ P = \ResCMVmix{\tilde{x}}{\tilde{y}}{{\left( P_1 \mid \ldots \mid P_k \right)}} $ in \CMVmix with $ k > 1 $ being an odd number. Assume that the associated hypergraph $ \Hypergraph{P} $ admits an automorphism $ \sigma \neq \id $ with only one orbit, and that $ P $ is symmetric \wrt $ \sigma $. Then $ P $ cannot be an electoral system.
	\label{lem:noElectoralSystemCMVmix}
\end{lemma}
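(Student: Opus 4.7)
The plan is to adapt Palamidessi's separation argument from \cite{palamidessi97}, instantiated with Lemma~\ref{lem:confluenceCMVmix} in place of the asynchronous-$\pi$ confluence she exploits. Suppose for contradiction that $P$ is an electoral system satisfying the hypotheses, and fix a maximal execution $E\colon P \steps P'$ in which some unique leader $n \in \Set{ 1, \ldots, k }$ is eventually announced.

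First I would use the symmetry of $P$ wrt $\sigma$ to produce, for each $m \in \Set{ 0, 1, \ldots, k-1 }$, a $\sigma^m$-translated execution $E_m$ that also starts from $P$ (up to $\approx_{\CMVmix}$) and announces the leader $\sigma^m(n)$. Because the orbit of $n$ has size $k$, these $k$ announced leaders are pairwise distinct. The goal is then to merge $E_0, \ldots, E_{k-1}$ into a single execution $\tilde{E}$ of $P$ in which \emph{all} $k$ distinct leaders are announced, contradicting the uniqueness requirement for electoral systems.

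The merging step is the crux of the argument and is where Lemma~\ref{lem:confluenceCMVmix} enters: whenever two pending steps from different $E_m$ act on distinct channel endpoints, confluence closes a diamond and the steps commute. Here the odd-$k$ hypothesis is essential. A step in \CMVmix has a footprint consisting of two endpoints $\Set{y, z}$ sitting at two nodes $\Set{i, j}$; its $\sigma^m$-image has footprint $\Set{\sigma^m(y), \sigma^m(z)}$ at $\Set{\sigma^m(i), \sigma^m(j)}$. Any genuine endpoint collision between the original and a $\sigma^m$-image would require $\sigma^m$ to fix an endpoint or to swap the two endpoints of a single channel, and under the single-node-orbit-of-size-$k$ assumption both possibilities force $k$ to be even. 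Thus for odd $k$ all $k$ translated copies of each step have pairwise non-conflicting footprints, and Lemma~\ref{lem:confluenceCMVmix} can be iterated to produce the merged schedule $\tilde{E}$.

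The main obstacle will be making this merging argument fully rigorous --- not just at a single diamond but across an interleaved schedule of $k$ executions that may each contain many communications. I expect to proceed by induction on the combined length of $E_0, \ldots, E_{k-1}$, maintaining the invariant that the scheduled prefix remains symmetric wrt $\sigma$ so that every still-pending step has all its $\sigma^m$-images enabled. The parity hypothesis enters precisely at the diamond-closing moves, to exclude self-matching orbits that would block Lemma~\ref{lem:confluenceCMVmix}. Once $\tilde{E}$ is constructed and extended maximally, the $k$ distinct barbs $n, \sigma(n), \ldots, \sigma^{k-1}(n)$ are all unguarded, contradicting the unique-leader property.
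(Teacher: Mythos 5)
Your proposal follows essentially the same route as the paper: a Palamidessi-style symmetry argument in which the confluence property of Lemma~\ref{lem:confluenceCMVmix} (available because choices in \CMVmix are confined to a single channel endpoint) is iterated to schedule every step together with all of its $\sigma^m$-images, with the single-orbit and odd-$k$ hypotheses playing exactly the roles you assign them---ensuring the resulting configuration is again symmetric and that no $\sigma^m$-image of a step can collide with the original on a channel endpoint. The only real difference is presentational: the paper iteratively extends and re-symmetrises one execution until symmetry forces either no leader or several, whereas you merge the $k$ translated copies of a single leader-electing run; both reach the same contradiction with uniqueness.
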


In the proof we construct a potentially infinite sequence of steps such that the system constantly restores symmetry, \ie whenever a step destroys symmetry we can perform a sequence of steps that restores the symmetry.
Therefore we rely on the assumption of $ \sigma $ generating only one orbit.
This implies that $ \Orbit{\sigma}{i} = \Set{ i, \sigma(i), \ldots, \sigma^{k - 1}(i) } = \Set{ 1, \ldots, k } $, for each $ i \in \Set{ 1, \ldots, k } $.
Because of that, whenever part $ i $ performs a step that destroys symmetry or parts $ i $ and $ j $ together perform a step that destroys symmetry, the respective other parts of the originally symmetric network can perform symmetric steps to restore the symmetry of the network.
Because of the symmetry, the constructed sequence of steps does not elect a unique leader.
Accordingly, the existence of this sequence ensures that $ P $ is not an electoral system.
The odd degree of the network is necessary to ensure that we can apply the confluence lemma, which in turn ensures that we can always perform the sequence of steps to restore symmetry after the step that destroys the symmetry.

By the preservation of distributability, encodings preserve the
structure of networks; and 
by name invariance, they also preserve the symmetry of networks.
With operational correspondence and barb-sensitiveness, any good encoding of $ \LEPi $ is again a symmetric electoral system of size five.
Since by Lemma~\ref{lem:noElectoralSystemCMVmix} this is not possible, 
we can separate \CMVmix from the \piCal.

\begin{theorem}[Separate \CMVmix from the \PiCal via Leader Election]
	\label{thm:separateCMVmixfromPiviaLeaderElection}
	$ $\\
	There is no good encoding from the \piCal into \CMVmix.
\end{theorem}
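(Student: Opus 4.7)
The plan is to assume, for contradiction, that a good encoding $\arbitraryEncoding$ from the \piCal into \CMVmix exists, apply it to the network $\LEPi$ of Example~\ref{exa:leaderElectionPi}, and show that $\ArbitraryEncoding{\LEPi}$ satisfies all hypotheses of Lemma~\ref{lem:noElectoralSystemCMVmix} while still being an electoral system. This immediately contradicts the lemma, forcing the non-existence of $\arbitraryEncoding$. Concretely I must produce, for the target, a network of five parallel components, a single-orbit automorphism of its hypergraph under which the network is symmetric, and the electoral property itself.

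First I would recover the structural shape. The components $S_1, \ldots, S_5$ of $\LEPi$ are distributable within $\LEPi$, so distributability preservation yields terms $T_1, \ldots, T_5$ that are distributable within $\ArbitraryEncoding{\LEPi}$ and satisfy $T_i \asymp \ArbitraryEncoding{S_i}$. By the definition of distributability in \CMVmix this means $\ArbitraryEncoding{\LEPi}$ is structurally congruent to a network of the form $\ResCMVmix{\tilde{x}}{\tilde{y}}{(T_1 \mid \ldots \mid T_5)}$, whose associated hypergraph has node set $\Set{1, \ldots, 5}$.

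Second I would transport symmetry via name invariance. The cyclic permutation $\sigma$ of Example~\ref{exa:leaderElectionPi} satisfies $S_{\sigma(i)} = S_i \sigma$ for every $i$. Name invariance gives $\ArbitraryEncoding{S_i \sigma} \asymp \ArbitraryEncoding{S_i} \sigma$, so $T_{\sigma(i)} \asymp \ArbitraryEncoding{S_{\sigma(i)}} = \ArbitraryEncoding{S_i \sigma} \asymp \ArbitraryEncoding{S_i} \sigma \asymp T_i \sigma$. Because $\asymp$ is a barb-respecting weak reduction bisimulation and is preserved under the renaming $\sigma$, composing these relations yields $T_{\sigma(i)} \approx_{\CMVmix} T_i \sigma$, which is exactly the symmetry requirement of Lemma~\ref{lem:noElectoralSystemCMVmix}. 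The induced renaming on the free channels extends $\sigma$ to an automorphism of $\Hypergraph{\ArbitraryEncoding{\LEPi}}$, and since its action on the five nodes is a $5$-cycle it has exactly one orbit on the nodes.

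Third I would carry over the electoral property. Operational completeness together with barb-sensitiveness lifts the unique leader barb visible in each source execution to a matching target execution, while operational soundness together with divergence reflection ensures that every (possibly infinite) target execution is tracked by a source execution. Hence the target can neither diverge without electing a leader nor announce a second one. Combining these facts, $\ArbitraryEncoding{\LEPi}$ is a symmetric electoral system on the odd number $5$ of nodes with a single-orbit automorphism, contradicting Lemma~\ref{lem:noElectoralSystemCMVmix}. The main obstacle lies in this last step: soundness only relates target reductions to source reductions up to $\asymp$, so one must argue carefully that $\asymp$-equivalent intermediate target states preserve the uniqueness and eventual stability of the announced barb. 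That $\asymp$ respects barbs and that Lemma~\ref{lem:noElectoralSystemCMVmix} already tolerates bisimulation-equivalent components (rather than requiring structural equality) is what closes this alignment.
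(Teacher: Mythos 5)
Your proposal is correct and follows essentially the same route as the paper: distributability preservation to recover the five-component network structure, name invariance to transport the single-orbit symmetry, and operational correspondence with barb-sensitiveness (plus divergence reflection) to show the translation is again a symmetric electoral system of odd size, contradicting Lemma~\ref{lem:noElectoralSystemCMVmix}.
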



\newpage
\section{Separating Mixed Sessions and the Pi-Calculus via Synchronisation}
\label{sec:separateMixedSessionsFromPiSynchronisationPatterns}

\begin{wrapfigure}{R}{0.25\textwidth}
	\centering
	\scalebox{0.6}{
	\tikzstyle{place}=[circle,draw=black,thick,minimum size=5mm]
	\tikzstyle{transition}=[rectangle,draw=black,thick,minimum size=5mm]
	\begin{tikzpicture}
		\foreach \x/\xtext in {1/e,2/d,3/c,4/b,5/a}
        {
            \path (360*\x/5+125:0.8) node[transition] (\xtext) {$\xtext$};
            \path (360*\x/5-55:1.75) node[place,tokens=1] (p\x) {};
        }

        \draw[-latex] (p2) -- (a);
        \draw[-latex] (p2) -- (b);

        \draw[-latex] (p1) -- (b);
        \draw[-latex] (p1) -- (c);

        \draw[-latex] (p5) -- (c);
        \draw[-latex] (p5) -- (d);

        \draw[-latex] (p4) -- (d);
        \draw[-latex] (p4) -- (e);

        \draw[-latex] (p3) -- (e);
        \draw[-latex] (p3) -- (a);
	\end{tikzpicture}
	}
\end{wrapfigure}

In \cite{petersNestmannGoltz13} the technique used in \cite{palamidessi97} and its relation to synchronisation are analysed.
Two synchronisation patterns, the pattern \patternM and the pattern \patternStar, are identified that describe two different levels of synchronisation and allow to more clearly separate languages along their ability to express synchronisation.
These patterns are called \patternM and \patternStar, because their respective representations as a Petri net (see left and right picture) have these shapes.
The pattern \patternStar captures the power of synchronisation of the \piCal.
In particular it captures what is necessary to solve the leader election problem.

\begin{wrapfigure}{L}{0.25\textwidth}
	\centering
	\scalebox{0.6}{
	\tikzstyle{place}=[circle,draw=black,thick,minimum size=5mm]
	\tikzstyle{transition}=[rectangle,draw=black,thick,minimum size=5mm]
	\begin{tikzpicture}
		\node[place,tokens=1]	(p) at (1, 1.5) {};
		\node[place,tokens=1]	(q) at (3, 1.5) {};
		\node[transition]		(a) at (0, 0) {$ a $};
		\node[transition]		(b) at (2, 0) {$ b $};
		\node[transition]		(c) at (4, 0) {$ c $};

		\draw[-latex] (p) -- (a);
		\draw[-latex] (p) -- (b);
		\draw[-latex] (q) -- (b);
		\draw[-latex] (q) -- (c);
	\end{tikzpicture}
	}
\end{wrapfigure}

The pattern \patternM captures a very weak form of synchronisation, not enough to solve leader election but enough to make a fully distributed implementation of languages with this pattern difficult (see also \cite{petersNestmannIC20}).
This pattern was originally identified in \cite{Glabbeek2008} when studying the relevance of synchrony and distribution on Petri nets.
As shown in \cite{peters12, petersNestmannGoltz13}, the ability to express these different amounts of synchronisation in the \piCal lies in its different forms of choices: to express the pattern \patternStar the \piCal needs mixed choice, whereas separate choice allows to express the pattern \patternM.
Indeed we find the pattern \patternM in \CMVmix, but there are no \patternStar in \CMVmix.

\begin{example}[A \patternM in \CMVmix]
	The process $ \PMCMVmix $ is a \patternM in \CMVmix:
	\begin{center}
		\begin{tikzpicture}[]
			\node (res) at (-0.4, 1) {$ \PMCMVmix = \ResCMVmix{x}{y}{(} $};
			\node[fill=blue!20, rounded corners=0.5cm] (l1) at (3, 1) {$ \begin{array}{c} \text{\begin{tiny} location 1 \end{tiny}}\\ \ChoiceCMVmix{}{x}{{\left( \OutCMVmix{\Label}{\true}{P_1} + \InpCMVmix{\Label}{z}{P_2} \right)}}\\ \ChoiceCMVmix{}{y}{{\left( \InpCMVmix{\Label}{z}{P_5} + \OutCMVmix{\Label}{\true}{P_6} \right)}} \end{array} $};
			\node (p1) at (5.25, 1) {$ \mid $};
			\node (p2) at (5.25, 0.55) {$ \mid $};
			\node[fill=green!20, rounded corners=0.5cm] (l2) at (7.5, 1) {$ \begin{array}{c} \text{\begin{tiny} location 2 \end{tiny}}\\ \ChoiceCMVmix{}{x}{{\left( \OutCMVmix{\Label}{\false}{P_3} + \InpCMVmix{\Label}{z}{P_4} \right)}}\\ \ChoiceCMVmix{}{y}{{\left( \InpCMVmix{\Label}{z}{P_7} + \OutCMVmix{\Label}{\false}{P_8} \right)}} \end{array} $};
			\node (p3) at (9.75, 1) {$ \mid $};
			\node (b) at (10, 0.55) {$ ) $};
		\end{tikzpicture}
	\end{center}
	A process is a \patternM if it can perform three steps $ a, b, c $, where $ a, b, c $ are names and not labels, such that $ a $ and $ b $ as well as $ b $ and $ c $ are in conflict whereas $ a $ and $ c $ are distributable steps.
	For instance we can pick the steps $ a $, $ b $, and $ c $ as:
	\begin{description}
		\item[Step $ a $:] $ \PMCMVmix \step \ResCMVmix{x}{y}{\left( P_1 \mid \ChoiceCMVmix{}{x}{{\left( \OutCMVmix{\Label}{\false}{P_3} + \InpCMVmix{\Label}{z}{P_4} \right)}} \mid P_5\Set{\Subst{\true}{z}} \mid \ChoiceCMVmix{}{y}{{\left( \InpCMVmix{\Label}{z}{P_7} + \OutCMVmix{\Label}{\false}{P_8} \right)}} \right)} $
		\item[Step $ b $:] $ \PMCMVmix \step \ResCMVmix{x}{y}{\left( P_1 \mid \ChoiceCMVmix{}{x}{{\left( \OutCMVmix{\Label}{\false}{P_3} + \InpCMVmix{\Label}{z}{P_4} \right)}} \mid \ChoiceCMVmix{}{y}{{\left( \InpCMVmix{\Label}{z}{P_5} + \OutCMVmix{\Label}{\true}{P_6} \right)}} \mid P_7\Set{\Subst{\true}{z}} \right)} $
		\item[Step $ c $:] $ \PMCMVmix \step \ResCMVmix{x}{y}{\left( \ChoiceCMVmix{}{x}{{\left( \OutCMVmix{\Label}{\true}{P_1} + \InpCMVmix{\Label}{z}{P_2} \right)}} \mid P_3 \mid \ChoiceCMVmix{}{y}{{\left( \InpCMVmix{\Label}{z}{P_5} + \OutCMVmix{\Label}{\true}{P_6} \right)}} \mid P_7\Set{\Subst{\false}{z}} \right)} $
	\end{description}
	The process $ \PMCMVmix $ is well-typed (see \cite{petersYoshidaTecRep22}). \qed
\end{example}

We use synchronisation patterns and the proof technique presented in \cite{petersNestmannGoltz13} to present an alternative way to prove Theorem~\ref{thm:separateCMVmixfromPiviaLeaderElection}.
By that we underpin our claim that the choice construct of \CMVmix is separate and not mixed, and we provide further intuition on why this choice construct is less expressive.

We inherit the definition of the synchronisation pattern \patternStar from \cite{petersNestmannGoltz13}, where we do not distinguish between local and non-local \patternStar since in the \piCal there is no difference between parallel and distributable steps.

\begin{definition}[Synchronisation Pattern \patternStar]
	\label{def:synchronisationPatternGreatM}
Let $ \ProcCal{\proc}{\step} $ be a process calculus and $ \PS \in \proc $ such that:
	\begin{itemize}
		\item $ \PS $ can perform at least five alternative reduction steps $ i : \PS \step P_i $ for $ i \in \Set{ a, b, c, d, e } $ such that the $ P_i $ are pairwise different;
		\item the steps $ a $, $ b $, $ c $, $ d $, and $ e $
                  form a circle such that $ a $ is in conflict with $
                  b $, $ b $ is in conflict with $ c $, $ c $ is in
                  conflict with $ d $, $ d $ is in conflict with $ e
                  $, and $ e $ is in conflict with $ a $; and 
		\item every pair of steps in $ \Set{ a, b, c, d, e } $ that is not in conflict due to the previous condition is distributable in $ \PS $.
	\end{itemize}
	In this case, we denote the process $ \PS $ as \patternStar.
\end{definition}

In contrast to \CMVmix we do find \patternStar in the \piCal. 

\begin{example}[The \patternStar in the \PiCal]
	\label{exa:piStar}
	Consider the following \patternStar in the \piCal:
	\begin{align*}
		\PSPi = \overline{a}{} + b.\overline{o_b} \mid \overline{b} + c.\overline{o_c} \mid \overline{c} + d.\overline{o_d} \mid \overline{d} + e.\overline{o_e} \mid \overline{e} + a.\overline{o_a} 
	\end{align*}
	The steps $ a, \ldots, e $ of Definition~\ref{def:synchronisationPatternGreatM} are the steps on the respective channels.
	\begin{description}
		\item[Step $ a $:] $ \PSPi \step S_a $ with $ S_a = \OutPi{b}{} + \InpPi{c}{}.\OutPi{o_c}{} \mid \OutPi{c}{} + \InpPi{d}{}.\OutPi{o_d}{} \mid \OutPi{d}{} + \InpPi{e}{}.\OutPi{o_e}{} \mid \OutPi{o_a}{} $,
		\item[Step $ b $:] $ \PSPi \step S_b $ with $ S_b = \OutPi{o_b}{} \mid \OutPi{c}{} + \InpPi{d}{}.\OutPi{o_d}{} \mid \OutPi{d}{} + \InpPi{e}{}.\OutPi{o_e}{} \mid \OutPi{e}{} + \InpPi{a}{}.\OutPi{o_a}{} $,
		\item[Step $ c $:] $ \PSPi \step S_c $ with $ S_c = \OutPi{a}{} + \InpPi{b}{}.\OutPi{o_b}{} \mid \OutPi{o_c}{} \mid \OutPi{d}{} + \InpPi{e}{}.\OutPi{o_e}{} \mid \OutPi{e}{} + \InpPi{a}{}.\OutPi{o_a}{} $,
		\item[Step $ d $:] $ \PSPi \step S_d $ with $ S_d = \OutPi{a}{} + \InpPi{b}{}.\OutPi{o_b}{} \mid \OutPi{b}{} + \InpPi{c}{}.\OutPi{o_c}{} \mid \OutPi{o_d}{} \mid \OutPi{e}{} + \InpPi{a}{}.\OutPi{o_a}{} $
		\item[Step $ e $:] $ \PSPi \step S_e $ with $ S_e = \OutPi{a}{} + \InpPi{b}{}.\OutPi{o_b}{} \mid \OutPi{b}{} + \InpPi{c}{}.\OutPi{o_c}{} \mid \OutPi{c}{} + \InpPi{d}{}.\OutPi{o_d}{} \mid \OutPi{o_e}{} $
	\end{description}
	The different outputs $ \OutPi{o_x} $ allow to distinguish between the different steps by their observables.\qed
\end{example}

We use the \patternStar $ \PSPi $ as counterexample to show that there is no good encoding from the \piCal into \CMVmix.
From Lemma~\ref{lem:noElectoralSystemCMVmix} we learned that \CMVmix cannot express certain electoral systems.
Accordingly, we are not surprised that \CMVmix cannot express the pattern \patternStar.

\begin{lemma}
	There are no \patternStar in \CMVmix.
	\label{lem:noStarCMVmix}
\end{lemma}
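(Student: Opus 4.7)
The plan is to assume for contradiction that some $\PS \in \procCMVmix$ realises the \patternStar pattern with steps $a, b, c, d, e$, and derive a structural contradiction. First, because the reduction of a conditional is deterministic, no two distinct steps reduce the same conditional, so every step appearing in a non-trivial conflict cycle must reduce a choice. By the communication rule of \CMVmix, each such step reduces exactly two choices, one on each of a pair of dual channel endpoints.

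For each adjacent pair $(i,j)$ in the cycle I fix a choice $\alpha_{ij}$ reduced by both steps; such a choice exists because $i$ and $j$ are in conflict. Two adjacent steps cannot reduce exactly the same two choices, for then they would be in conflict with precisely the same further steps, contradicting the asymmetry of the cycle (for instance, $a$ conflicts with $e$ but $b$ does not). Hence $\alpha_{ij}$ is the unique shared choice of $i$ and $j$, and applying the same reasoning to each step's two neighbours shows that the two choices reduced by every step are exactly the two $\alpha$'s it shares with its cycle neighbours. The five $\alpha$'s are pairwise distinct: if two coincided, two non-adjacent steps would share a choice and thus be in conflict, contradicting the distributability (hence non-conflict) of non-adjacent pairs.

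The key observation concerns endpoints. Step $a$ reduces $\alpha_{ea}$ and $\alpha_{ab}$ on the two dual endpoints of one channel, and since every endpoint has a unique dual, walking around the cycle forces all five $\alpha$'s onto the same unordered pair $\{y, z\}$ of dual endpoints. Define $f(\alpha) \in \{y, z\}$ to be the endpoint carrying $\alpha$. Each step requires its two neighbour-$\alpha$'s to take opposite $f$-values, so $f$ must alternate around the cyclic sequence $\alpha_{ea}, \alpha_{ab}, \alpha_{bc}, \alpha_{cd}, \alpha_{de}, \alpha_{ea}$. Because this cycle has odd length five, no such alternating 2-colouring exists, and the required contradiction follows.

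The main subtlety I foresee is that in the untyped setting several distinct choice occurrences may sit on the same endpoint (via unrestricted channels or simply through parallel composition), so the argument must be framed at the level of individual choice occurrences rather than endpoints. Once that granularity is fixed, no machinery beyond the principle underlying Lemma~\ref{lem:confluenceCMVmix} — that two conflicting reductions must meet at a shared choice occurrence — is required; the parity obstruction on the 5-cycle does the rest.
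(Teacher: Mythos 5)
Your proposal is correct and follows essentially the same route as the paper's own proof: rule out conditionals, observe that each conflict forces a shared choice, chain the five choices around the cycle, and derive a contradiction from the fact that dual endpoints must alternate around an odd cycle. Your additional care in arguing that the five shared choices are pairwise distinct and all sit on the same dual pair of endpoints makes explicit what the paper's ``by repeating this argument'' glosses over, but the underlying parity obstruction is identical.
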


\begin{proof}
	Assume the contrary, \ie assume that there is a term $ \PSCMVmix $ in \CMVmix that is a \patternStar.
	Then $ \PSCMVmix $ can perform at least five alternative reduction steps $ a, b, c, d, e $ such that neighbouring steps in the sequence $ a, b, c, d, e, a $ are pairwise in conflict and non-neighbouring steps are distributable.
	Since steps reducing a conditional cannot be in conflict with any other step, none of the steps in $ \Set{a, b, c, d, e} $ reduces a conditional.
	Then all steps in $ \Set{a, b, c, d, e} $ are communication steps that reduce an output and an input that both are part of choices (with at least one summand).
	Because of the conflict between $ a $ and $ b $, these two steps reduce the same choice but this choice is not reduced in $ c $, because $ a $ and $ c $ are distributable.

	\vspace{0.25em}
	\noindent
	\begin{minipage}{\textwidth}
	\begin{wrapfigure}{R}{0.225\textwidth}
		\centering
		\scalebox{0.8}{
		\begin{tikzpicture}[]
			\foreach \x/\xlabel/\xtext/\ytext in {1/e/$ C_5 $/$ b $,2/d/$ C_4 $/$ a $,3/c/$ C_3 $/$ e $,4/b/$ C_2 $/$ d $,5/a/$ C_1 $/$ c $}
	        {
	            \path (360*\x/5+125:0.8) node (\xlabel) {\xtext};
	            \path (360*\x/5-55:1.75) node (p\x) {\ytext};
	        }

	        \draw[-latex] (p2) -- (a);
	        \draw[-latex] (p2) -- (b);

	        \draw[-latex] (p1) -- (b);
	        \draw[-latex] (p1) -- (c);

	        \draw[-latex] (p5) -- (c);
	        \draw[-latex] (p5) -- (d);

	        \draw[-latex] (p4) -- (d);
	        \draw[-latex] (p4) -- (e);

	        \draw[-latex] (p3) -- (e);
	        \draw[-latex] (p3) -- (a);
		\end{tikzpicture}
		}
	\end{wrapfigure}
	
	\hspace{1em}
	By repeating this argument, we conclude that in the steps $ a, b, c, d, e $ five choices $ C_1, \ldots, C_5 $ are reduced as depicted on the right,
	where \eg the step $ a $ reduces the choices $ C_1 $ and $ C_2 $.
	By the reduction semantics of \CMVmix, the two choices $ C_1 $ and $ C_2 $ that are reduced in step $ a $ need to use dual endpoints of the same channel.
	Without loss of generality, assume that $ C_1 $ is on channel endpoint $ x $ and $ C_2 $ is on channel endpoint $ y $.
	Then the choice $ C_3 $ needs to be on channel endpoint $ x $ again, because step $ b $ reduces $ C_2 $ (on $ y $) and $ C_3 $.
	By repeating this argument, then $ C_4 $ is on $ y $ and $ C_5 $ is on $ x $.
	But then step $ e $ reduces two choices $ C_1 $ and $ C_5 $ that are both on channel endpoint $ x $.
	Since the reduction semantics of \CMVmix does not allow such a step, this is a contradiction.
	\end{minipage}

	We conclude that there are no \patternStar in \CMVmix.
\end{proof}

The proof of the above lemma tells us more about why choice in \CMVmix is limited.
From the confluence property in \CMVmix we get the hint that the problem is the restriction of choice to a single channel endpoint.
A \patternStar is a circle of steps of odd degree, where neighbouring steps are in conflict.
More precisely, the star with five points in \patternStar is the smallest cycle of steps where neighbouring steps are in conflict and that contains non-neighbouring distributable steps.
The proof shows that the limitation of choice to a single channel endpoint and the requirement of the semantics that a channel endpoint can interact with exactly one other channel endpoint causes the problem.
This also explains why Lemma~\ref{lem:noElectoralSystemCMVmix} considers electoral systems of odd degree, because the odd degree does not allow to close the cycle as explained in the proof above.
Indeed, if we change the syntax to allow mixed choice with summands on more than one channel, we obtain the mixed-choice-construct of the \piCal.
Similarly, we invalidate our separation result in the Theorems~\ref{thm:separateCMVmixfromPiviaLeaderElection} and \ref{thm:separateCMVmixfromPiviaStar}, if we change the semantics to allow two choices to communicate even if they are on the same channel.
The latter may be more surprising, but indeed we do not need more than a single channel to solve leader election and build \patternStar, \eg $ \PSPi $ remains a star if we choose $ a = b = c = d = e $ (though we might want to pick different names $ o_a, \ldots, o_e $ to be able to distinguish the steps).

We use $ \PSPi $ in Example~\ref{exa:piStar} as counterexample.

\begin{theorem}[Separate \CMVmix and the \PiCal via \patternStar]
	\label{thm:separateCMVmixfromPiviaStar}
	$ $\\
	There is no good and distributability preserving encoding from the \piCal into \CMVmix.
\end{theorem}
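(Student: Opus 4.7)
The plan is to argue by contradiction: assume a good and distributability preserving encoding $\arbitraryEncoding$ from the \piCal into \CMVmix exists, and apply it to the process $\PSPi$ of Example~\ref{exa:piStar}. The contradiction will come from Lemma~\ref{lem:noStarCMVmix} once we exhibit a \patternStar reachable from $\ArbitraryEncoding{\PSPi}$ in the target.

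First I would use operational completeness to obtain, for each $i \in \Set{a,b,c,d,e}$, a reduction $\ArbitraryEncoding{\PSPi} \stepsTarget T_i$ with $T_i \asymp \ArbitraryEncoding{S_i}$. By barb-sensitiveness applied to the pairwise distinct observables $o_a, \ldots, o_e$ already introduced in Example~\ref{exa:piStar}, each $T_i$ weakly reaches exactly the barb $o_i$ and none of the others. This ensures the five reduction branches in the target are pairwise behaviourally distinct and cannot collapse.

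Next, following the decisive-step technique of \cite{petersNestmannGoltz13}, I would locate in each of the five target reductions the \emph{single} step committing to the outcome $o_i$, and then use compositionality together with distributability preservation to argue that these five decisive steps are in fact all enabled at a common intermediate target state $T$ with $\ArbitraryEncoding{\PSPi} \stepsTarget T$. Name invariance and compositionality ensure that the translations of the five parallel summands of $\PSPi$ sit in distributable locations of $\ArbitraryEncoding{\PSPi}$; distributability preservation carries this fact along any fully distributable prefix of the encoding's emulation. The conflict-vs-distributability structure of the source steps then transfers to the decisive target steps: neighbouring source steps share a summand in $\PSPi$, forcing their emulations to compete for the translation of that summand and hence to be in conflict at $T$; non-neighbouring source steps are distributable in $\PSPi$, and operational soundness prevents the encoding from manufacturing a spurious conflict between them (otherwise one could not recover both alternatives up to $\asymp$). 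This gives a \patternStar at $T$ in \CMVmix, contradicting Lemma~\ref{lem:noStarCMVmix}.

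The main obstacle will be step three: rigorously justifying that the five decisive target steps really are simultaneously enabled and that their pairwise conflict/distributability structure exactly matches that of $\Set{a,b,c,d,e}$ in $\PSPi$. One has to balance operational completeness (which supplies the five branches but possibly only after different preparatory steps) against operational soundness and distributability preservation (which rule out that the encoding serialises or fuses originally distributable behaviour). Once the \patternStar is witnessed at a single state $T \in \procCMVmix$, the lemma closes the argument immediately, and the same contradiction shows the encoding cannot exist, yielding Theorem~\ref{thm:separateCMVmixfromPiviaStar} as an alternative proof of Theorem~\ref{thm:separateCMVmixfromPiviaLeaderElection}.
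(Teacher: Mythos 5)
Your choice of counterexample ($\PSPi$) and of the closing lemma (Lemma~\ref{lem:noStarCMVmix}) matches the paper, but the pivotal step of your argument --- assembling a \patternStar at a single reachable target state $T$ and contradicting Lemma~\ref{lem:noStarCMVmix} directly --- is a genuine gap, and it is exactly the part you flag as ``the main obstacle''. The encodability criteria do not force the encoding to realise, for each source step $i$, a \emph{single} decisive target step that is simultaneously in conflict with the decisive steps of both neighbours. Note that in $\PSPi$ the two conflicts of step $b$ (with $a$ and with $c$) already live in two \emph{distributable} parallel components; what ties them together in the source is the atomicity of the communication on $b$, which consumes a summand from each. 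A compositional encoding is free to emulate this by a protocol that commits to winning the conflict in the translation of the first component and the conflict in the translation of the second component in two separate, distributable target steps. In that case there is no common state exhibiting the five-fold cyclic conflict structure, so your intended contradiction with Lemma~\ref{lem:noStarCMVmix} never materialises. Indeed, Lemma~\ref{lem:noStarCMVmix} guarantees that no such state exists in \CMVmix, so a proof that derives its existence from the criteria must fail somewhere --- and it fails precisely here.

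The paper's proof takes the complementary route: it first shows that the source conflicts must be translated into conflicts, then uses Lemma~\ref{lem:noStarCMVmix} not as the final contradiction but as the reason why the encoding \emph{must} break the cycle, \ie must split at least one step's two conflicts into two distributable conflicts, one per neighbour. The contradiction is then extracted from this splitting: because the two partial commitments are distributable, an adversarial scheduling can resolve one in favour of the emulation of $b$ and the other in favour of the emulation of a neighbour, leaving the system in a partially committed state whose (lack of) reachable barbs matches no source derivative modulo $\asymp$, violating operational soundness and barb-sensitiveness. To repair your proposal you would need to replace the direct ``\patternStar at $T$'' claim by this case analysis: either the cyclic conflict structure is realised in the target (impossible by Lemma~\ref{lem:noStarCMVmix}) or a conflict is distributed (impossible by the new observable behaviour it induces).
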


To prove the above theorem, we show that the conflicts in the counterexample $ \PSPi $ have to be translated into conflicts in its literal translation.
Since the target language \CMVmix cannot express a \patternStar, to emulate $ \PSPi $ it has to break the cycle and split at least one of the conflicts with the respective two neighbouring steps into two distributable conflicts: one for each neighbour.
This causes a contradiction, because the distribution of the conflict induces new behaviour that is observable modulo the criteria we picked for good encodings.


\section{Encoding Mixed Sessions into Separate Choice}
\label{sec:encodeMixedSessions}

In \cite[\S~7]{casalMordidoVasconcelos22} an encoding of mixed sessions (\CMVmix) into the variant of this session type system \CMV with only separate choice (branching and selection) is presented.
The proof of soundness of this encoding is missing in \cite{casalMordidoVasconcelos22}.
They suggest to prove soundness modulo ``a weak form of bisimulation''.
As discussed below, the soundness criterion used in \cite{casalMordidoVasconcelos22} needs to be corrected first.

The main idea of $ \EncCMVmixCMV{\cdot} $ is to encode the information about whether a summand is an output or an input into the label used in branching, where a label $ \Label_i $ used with polarity $ ! $ in a choice typed as internal becomes $ \Label_{i, !} $ and in a choice typed as external it becomes $ \Label_{i, ?} $.
The dual treatment of polarities \wrt the type ensures that the labels of matching communication partners are translated to the same label.

\begin{example}[Translation]
	\label{exa:translation}
	Consider for example the term $ S \in \procCMVmix $:
	\begin{align*}
		S &= \ResCMVmix{x}{y}{\left( \ChoiceCMVmix{}{y}{\left( \textcolor{blue}{\OutCMVmix{\Label}{\false}{S_1}} + \textcolor{darkgreen}{\InpCMVmix{\Label}{z}{S_2}} \right)} \mid \ChoiceCMVmix{}{x}{\left( \textcolor{orange}{\OutCMVmix{\Label}{\true}{\inactCMVmix}} + \textcolor{red}{\InpCMVmix{\Label}{z}{\inactCMVmix}} \right)} \mid \ChoiceCMVmix{}{y}{\left( \OutCMVmix{\Label}{\false}{S_3} + \InpCMVmix{\Label}{z}{S_4} \right)} \right)}
	\end{align*}
	$ S $ is well-typed but the type system forces us to assign dual types to $ x $ and $ y $.
	Because of that, the choices on one channel need to be internal and on the other external.
	Let us assume that we have external choices on $ y $ and that the choice on $ x $ is internal.
	Moreover, we assume that both channels are marked as linear but typed as unrestricted.
	Then the translation\footnote{Note that \cite{casalMordidoVasconcelos22} introduces a typed encoding, thus $ \EncCMVmixCMV{P} $ actually means $ \EncCMVmixCMV{\Gamma \vdash P} $, where $ \Gamma \vdash P $ is the type statement ensuring that $ P $ is well-typed.} yields $ \EncCMVmixCMV{S} \steps T_1 $ with
	\begin{align*}
		T_1 = \ResCMV{x}{y}{\big(
			\!\!\begin{array}[t]{l}
				\InpCMV{\!}{y}{c}{\BranCMV{c}{\Set{\BranchCMV{\textcolor{blue}{\Label_{?}}}{\left( \OutCMV{c}{\textcolor{blue}{\false}}{\textcolor{blue}{\EncCMVmixCMV{S_1}}} \mid J_1 \right)}, \quad \BranchCMV{\textcolor{darkgreen}{\Label_{!}}}{\left( \InpCMV{\!}{c}{\textcolor{darkgreen}{z}}{\textcolor{darkgreen}{\EncCMVmixCMV{S_2}}} \mid J_2 \right)}}}}\\
				{}\mid \ResCMV{s}{t}{\big(
					\!\!\begin{array}[t]{l}
						\BranCMV{s}{\Set{ \BranchCMV{\Label_1}{\ResCMV{c}{d}{\left( \OutCMV{x}{c}{\SelCMV{d}{\textcolor{orange}{\Label_{!}}}{\left( \OutCMV{d}{\textcolor{orange}{\true}}{\textcolor{orange}{\inactCMV}} \mid J_3\right)}} \right)}}, \quad \BranchCMV{\Label_2}{\ResCMV{c}{d}{\left( \OutCMV{x}{c}{\SelCMV{d}{\textcolor{red}{\Label_{?}}}{\left( \InpCMV{\!}{d}{\textcolor{red}{z}}{\textcolor{red}{\inactCMV}} \mid J_4\right)}} \right)}} }}\\
						{}\mid \SelCMV{t}{\Label_1}{\inactCMV} \mid \SelCMV{t}{\Label_2}{\inactCMV} \big)
					\end{array}}\\
				{}\mid \InpCMV{\!}{y}{c}{\BranCMV{c}{\Set{\BranchCMV{\Label_{?}}{\left( \OutCMV{c}{\false}{\EncCMVmixCMV{S_3}} \mid J_5 \right)}, \quad \BranchCMV{\Label_{!}}{\left( \InpCMV{\!}{c}{z}{\EncCMVmixCMV{S_4}} \mid J_6 \right)}}}} \big)
			\end{array}}
	\end{align*}
	where we already performed a few steps to hide some technical details of the encoding function $ \EncCMVmixCMV{\cdot} $ that are not relevant for this explanation and where the $ J_1, \ldots, J_6 $ remain as junk from performing these steps.
	We call terms junk if they are stuck and do not emit barbs, \ie we can ignore the junk.
	In particular, junk is invisible modulo $ \bisimCMV $.
	We observe, that in the translation of the first $ \ChoiceCMVmix{}{y}{\left( \textcolor{blue}{\OutCMVmix{\Label}{\false}{S_1}} + \textcolor{darkgreen}{\InpCMVmix{\Label}{z}{S_2}} \right)} $ in the first line of $ T_1 $ the output with label $ \textcolor{blue}{\Label} $ is translated to the label $ \textcolor{blue}{\Label_{?}} $ and the input with label $ \textcolor{darkgreen}{\Label} $ is translated to the label $ \textcolor{darkgreen}{\Label_{!}} $, whereas in the translation of its dual $ \ChoiceCMVmix{}{x}{\left( \textcolor{orange}{\OutCMVmix{\Label}{\true}{\inactCMVmix}} + \textcolor{red}{\InpCMVmix{\Label}{z}{\inactCMVmix}} \right)} $ in the second line of $ T_1 $ we obtain $ \textcolor{orange}{\Label_{!}} $ for the output and $ \textcolor{red}{\Label_{?}} $ for the input.
	To emulate the step $ S \step S_2' = \ResCMVmix{x}{y}{\left( S_2\Set{\Subst{\true}{z}} \mid \ChoiceCMVmix{}{y}{\left( \OutCMVmix{\Label}{\false}{S_3} + \InpCMVmix{\Label}{z}{S_4} \right)} \right)} $ of $ S $ in that $ \textcolor{orange}{\true} $ is transmitted to $ \textcolor{darkgreen}{S_2} $, we start by picking the corresponding alternative, namely $ \Label_1 $ for sending, in the second and third line of $ T_1 $
	\begin{align*}
		T_1 \step T_2 = \ResCMV{x}{y}{\big(
			\!\!\begin{array}[t]{l}
				\InpCMV{\!}{y}{c}{\BranCMV{c}{\Set{\BranchCMV{\textcolor{blue}{\Label_{?}}}{\left( \OutCMV{c}{\textcolor{blue}{\false}}{\textcolor{blue}{\EncCMVmixCMV{S_1}}} \mid J_1 \right)}, \quad \BranchCMV{\textcolor{darkgreen}{\Label_{!}}}{\left( \InpCMV{\!}{c}{\textcolor{darkgreen}{z}}{\textcolor{darkgreen}{\EncCMVmixCMV{S_2}}} \mid J_2 \right)}}}}\\
				{}\mid \ResCMV{c}{d}{\left( \OutCMV{x}{c}{\SelCMV{d}{\textcolor{orange}{\Label_{!}}}{\left( \OutCMV{d}{\textcolor{orange}{\true}}{\textcolor{orange}{\inactCMV}} \mid J_3\right)}} \right)} \mid J_7\\
				{}\mid \InpCMV{\!}{y}{c}{\BranCMV{c}{\Set{\BranchCMV{\Label_{?}}{\left( \OutCMV{c}{\false}{\EncCMVmixCMV{S_3}} \mid J_5 \right)}, \quad \BranchCMV{\Label_{!}}{\left( \InpCMV{\!}{c}{z}{\EncCMVmixCMV{S_4}} \mid J_6 \right)}}}} \big)
			\end{array}}
	\end{align*}
	where $ J_7 $ again remains as junk.
	Then we perform a communication on $ xy $, where we chose the input on $ y $ in the first line:
	\begin{align*}
		T_2 \step T_3 = \ResCMV{x}{y}{\big(
			\!\!\begin{array}[t]{l}
				\ResCMV{c}{d}{\big(
					\!\!\begin{array}[t]{l}
						\BranCMV{c}{\Set{\BranchCMV{\textcolor{blue}{\Label_{?}}}{\left( \OutCMV{c}{\textcolor{blue}{\false}}{\textcolor{blue}{\EncCMVmixCMV{S_1}}} \mid J_1 \right)}, \quad \BranchCMV{\textcolor{darkgreen}{\Label_{!}}}{\left( \InpCMV{\!}{c}{\textcolor{darkgreen}{z}}{\textcolor{darkgreen}{\EncCMVmixCMV{S_2}}} \mid J_2 \right)}}}\\
						{}\mid \SelCMV{d}{\textcolor{orange}{\Label_{!}}}{\left( \OutCMV{d}{\textcolor{orange}{\true}}{\textcolor{orange}{\inactCMV}} \mid J_3\right)} \big) \mid J_7
					\end{array}}\\
				{}\mid \InpCMV{\!}{y}{c}{\BranCMV{c}{\Set{\BranchCMV{\Label_{?}}{\left( \OutCMV{c}{\false}{\EncCMVmixCMV{S_3}} \mid J_5 \right)}, \quad \BranchCMV{\Label_{!}}{\left( \InpCMV{\!}{c}{z}{\EncCMVmixCMV{S_4}} \mid J_6 \right)}}}} \big)
			\end{array}}
	\end{align*}
	Finally, two more steps on $ cd $ resolve the branching and transmit $ \textcolor{orange}{\true} $:
	\begin{align*}
		T_3 \step\step T_4 = \ResCMV{x}{y}{\big(
			\!\!\begin{array}[t]{l}
				\textcolor{darkgreen}{\EncCMVmixCMV{S_2}}\Set{\Subst{\textcolor{orange}{\true}}{\textcolor{darkgreen}{z}}} \mid J_2 \mid J_3 \mid J_7 \mid J_8\\
				{}\mid \InpCMV{\!}{y}{c}{\BranCMV{c}{\Set{\BranchCMV{\Label_{?}}{\left( \OutCMV{c}{\false}{\EncCMVmixCMV{S_3}} \mid J_5 \right)}, \quad \BranchCMV{\Label_{!}}{\left( \InpCMV{\!}{c}{z}{\EncCMVmixCMV{S_4}} \mid J_6 \right)}}}} \big)
			\end{array}}
	\end{align*}
	This completes the emulation of $ S \step S_2' $, \ie the emulation of the single source term step $ S \step S_2' $ required a sequence of target term steps $ \EncCMVmixCMV{S} \steps T_1 \step T_2 \step T_3 \step \step T_4 $.
	\qed
\end{example}

The operational soundness is defined in \cite{casalMordidoVasconcelos22} as 
(adapting the notation):
\begin{equation}
	\label{eq:cmvsound}
	\text{If $ \ArbitraryEncoding{S} \stepTarget T $ then $ S \stepSource S' $ and $ T \stepsTarget \asymp \ArbitraryEncoding{S'} $.}
\end{equation}
As visualised above, the encoding translates a single source term step into a sequence of target term steps.
Unfortunately, for such encodings the statement in (\ref{eq:cmvsound}) is not strong enough: 
with (\ref{eq:cmvsound}), we check only that the first step on a literal translation does not introduce new behaviour.
The requirement $ T \stepsTarget \asymp \ArbitraryEncoding{S'} $ additionally checks that the emulation started with $ \ArbitraryEncoding{S} \stepTarget T $ can be completed, but not that there are no alternative steps introducing new behaviour.
Hence we prove a correct version of soundness as defined in \cite{gorla10} (see Definition~\ref{def:goodEncodingA}).

\begin{lemma}[Soundness, $ \EncCMVmixCMV{\cdot} $]
	The encoding $ \EncCMVmixCMV{\cdot} $ is operationally sound modulo $ \bisimCMV $, \ie $ \EncCMVmixCMV{S} \steps T $ implies $ S \steps S' $ and $ T \steps \bisimCMV \EncCMVmixCMV{S'} $.
	\label{lem:soundnessEncCMVmixCMV}
\end{lemma}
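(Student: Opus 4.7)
The plan is to proceed by induction on the length $n$ of the reduction $\EncCMVmixCMV{S} \steps T$. For the base case $n = 0$ take $S' = S$, so that $T = \EncCMVmixCMV{S}$ and the required $T \steps \bisimCMV \EncCMVmixCMV{S}$ holds trivially (in zero steps). For the inductive step, split the reduction as $\EncCMVmixCMV{S} \steps T_0 \step T$; by the inductive hypothesis we obtain some $S_0$ with $S \steps S_0$ and $T_0 \steps \bisimCMV \EncCMVmixCMV{S_0}$. The core of the argument is then to analyse the one additional step $T_0 \step T$ (really, a step on a term bisimilar to $\EncCMVmixCMV{S_0}$) and to lift it into a completed emulation yielding a source step $S_0 \steps S'$ with $T \steps \bisimCMV \EncCMVmixCMV{S'}$, allowing us to concatenate $S \steps S_0 \steps S'$.

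To carry this out, following the pattern exposed in Example~\ref{exa:translation}, I would first classify the steps available inside $\EncCMVmixCMV{S_0}$ into four kinds: \textbf{(i)} a conditional reduction inside the translation of a $\ConditionalCMVmix{v}{P}{Q}$, which mirrors a source conditional step one-to-one; \textbf{(ii)} a ``commit'' step on an internal choice side, which selects which source summand is being fired by firing the local auxiliary branching $\BranCMV{s}{\Set{\BranchCMV{\Label_i}{\ldots}}}$; \textbf{(iii)} the forwarding of the freshly restricted channel $c$ from the internal to the external side of the encoded choice; and \textbf{(iv)} the label selection and data transmission on $c,d$ that finally delivers a matching label pair to the external side. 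Kind (i) dispatches the inductive step directly. For (ii)--(iv) the plan is to show that once a commit step of type (ii) has occurred, the remaining steps (iii) and (iv) on that particular emulation are \emph{forced}: the channels $c,d$ are bound and fresh, so no process outside the emulation can compete for them, and the confluence argument behind Lemma~\ref{lem:confluenceCMVmix} (adapted to \CMV) shows that these forced steps commute with every other reduction available in the context.

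Using this, from any $T_0 \step T$ reached by a partial emulation I would exhibit a finite completion $T \steps T'$ consisting of the remaining forced steps; this completion corresponds to exactly one source step $S_0 \step S'$, and $T' \bisimCMV \EncCMVmixCMV{S'}$. The bisimilarity up to $\bisimCMV$ is precisely where the leftover junk processes $J_i$ produced by the encoding (as highlighted in Example~\ref{exa:translation}) are absorbed: each $J_i$ is stuck and emits no barb, hence is bisimilar to the empty process, so the completed target term is bisimilar to a garbage-collected form that coincides with $\EncCMVmixCMV{S'}$.

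The main obstacle will be the case analysis for step kind (ii), especially in the unrestricted setting where an encoded choice can be replicated and several commit steps may be in flight at the same time. One must check that a commit on the internal side of one replica cannot be ``cross-wired'' with a non-dual branch belonging to another replica or to a different source choice. The polarity trick $\Label_{!}/\Label_{?}$ built into $\EncCMVmixCMV{\cdot}$ is designed exactly to make only matching polarities synchronise, but verifying this rigorously at the level of partially-committed intermediate terms, together with the careful bookkeeping needed to argue that all residual $J_i$ are indeed stuck and barb-free (so they can be dropped up to $\bisimCMV$), is the technical heart of the proof. Once these invariants are in place, the induction closes and soundness follows.
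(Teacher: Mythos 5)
Your overall strategy---induction on the length of $\EncCMVmixCMV{S} \steps T$, a classification of target steps into the kinds generated by the encoding (conditional, commit, forwarding on the fresh channel, label selection and transmission), completion of partial emulations by forced steps, and absorption of the junk $J_i$ modulo $\bisimCMV$---is essentially the proof the paper gives: it likewise proceeds by induction on the number of steps and ``analyses the encoding function in order to distinguish between different kinds of target term steps and the emulations of source term steps to that they belong.''

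There is, however, one concrete gap in the way you set up the induction. Your inductive hypothesis only yields $S \steps S_0$ and $T_0 \steps \bisimCMV \EncCMVmixCMV{S_0}$, and you then propose to analyse the extra step $T_0 \step T$ as ``really, a step on a term bisimilar to $\EncCMVmixCMV{S_0}$.'' That parenthetical is false in general: $T_0$ is typically an \emph{intermediate state}, and the paper itself exhibits such a $T_2$ with $T_2 \not\bisimCMV \EncCMVmixCMV{S}$ and $T_2 \not\bisimCMV \EncCMVmixCMV{S'}$ for every $S \step S'$. Knowing only that $T_0$ \emph{can reach} something bisimilar to $\EncCMVmixCMV{S_0}$ tells you nothing about the other steps $T_0$ can perform right now, which is exactly what soundness must control. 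To make the induction close you must strengthen the invariant to a structural characterisation of every reachable target term: $T_0$ is, up to $\bisimCMV$ and junk, the literal translation of some $S_0$ with a recorded set of emulations in flight, each in one of your stages (ii)--(iv). Your later discussion of ``partially-committed intermediate terms'' and forced completions shows you see what is needed; the point is that this bookkeeping has to live \emph{inside} the inductive hypothesis, not be recovered after the fact from a bisimilarity that does not hold. With that strengthening, and the verification you already flag for the unrestricted/replicated case (no cross-wiring of commits between replicas, all $J_i$ stuck and barb-free), your argument matches the paper's.
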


As suggested we use $ \bisimCMV $, \ie a form of weak reduction barbed bisimilarity that we simply call bisimilarity in the following.
For soundness we have to show that all steps of encoded terms belong modulo bisimilarity to the emulation of a source term step.
To prove Lemma~\ref{lem:soundnessEncCMVmixCMV}, we analyse the sequence of steps $ \EncCMVmixCMV{S} \steps T $ and identify all source term steps $ S \steps S' $ whose emulation is started within $ \EncCMVmixCMV{S} \steps T $ and the target term steps $ T \steps \bisimCMV \EncCMVmixCMV{S'} $ that are necessary to complete all started emulations modulo bisimulation.
Therefore, we use an induction on the number of steps in the sequence $ \EncCMVmixCMV{S} \steps T $ and analyse the encoding function in order to distinguish between different kinds of target term steps and the emulations of source term steps to that they belong.
Note that, as it is typical for many encodability results, the proof of operational soundness is more elaborate than the proof of operational completeness presented in \cite{casalMordidoVasconcelos22}.

In Example~\ref{exa:translation} we have $ T_4 \bisimCMV \EncCMVmixCMV{S_2'} $, because all differences between $ T_4 $ and $ \EncCMVmixCMV{S_2'} $ are due to junk that cannot be observed modulo $ \bisimCMV $.
In fact, we have already $ T_3 \bisimCMV \EncCMVmixCMV{S_2'} $, since we consider a weak form of bisimulation here.

In the above variant of soundness $ T $ can catch up with the source term $ S' $ by the steps $ T \steps \bisimCMV \EncCMVmixCMV{S'} $.
This allows for so-called \emph{intermediate states}: target terms that are strictly in between the translation of two source terms, \ie $ T $ such that $ S \step S' $, $ \EncCMVmixCMV{S} \steps T \steps \bisimCMV \EncCMVmixCMV{S'} $, but neither $ \EncCMVmixCMV{S} \bisimCMV T $ nor $ \EncCMVmixCMV{S'} \bisimCMV T $ (see \cite{parrow1992, petersNestmannGoltz13}).
In $ \EncCMVmixCMV{\cdot} $ such intermediate states are caused by mapping the task of finding matching communication partners of a single source term step onto several steps in the target.
Consider the term $ T_2 $ in the above emulation of $ S \step S_2' $.
By picking the branch with label $ \Label_1 $, we discarded the branch with label $ \Label_2 $.
Because of that, the emulation starting with $ \EncCMVmixCMV{S} \steps T_2 $ can no longer emulate source term steps of $ S $ that use channel $ x $ for receiving, \ie $ T_2 \not\bisimCMV \EncCMVmixCMV{S} $.
But, since we have not yet decided whether we emulate a communication with the first or second choice on $ y $, we also have $ T_2 \not\bisimCMV \EncCMVmixCMV{S_2'} $ whenever $ S_2 \not\bisimCMVmix S_4 $.
Indeed, if we assume that $ S_1, S_2, S_3, S_4 $ are pairwise not bisimilar, then $ T_2 \not\bisimCMV \EncCMVmixCMV{S'} $ for all $ S \step S' $, \ie $ T_2 $ is an intermediate state.

The existence of intermediate states prevents us from using stronger versions of soundness, \ie with $ T \asymp \ArbitraryEncoding{S'} $ instead of the requirement $ T \stepsTarget \asymp \ArbitraryEncoding{S'} $ in soundness.
The encoding $ \EncCMVmixCMV{\cdot} $ needs the steps in $ T \steps \bisimCMV \EncCMVmixCMV{S'} $ to complete the emulation of source term steps started in $ \EncCMVmixCMV{S} \steps T $.
With the soundness result we can complete the proof of \cite{casalMordidoVasconcelos22} that $ \EncCMVmixCMV{\cdot} $ presented in
\cite[\S~7]{casalMordidoVasconcelos22} is good.

\begin{theorem}[Encoding from \CMVmix into \CMV]
	\label{thm:encodeCMVmixintoCMV}
	The encoding $ \EncCMVmixCMV{\cdot} $ from \CMVmix into \CMV presented in \cite{casalMordidoVasconcelos22} is good. By this encoding source terms in \CMVmix and their literal translations in \CMV are related by coupled similarity.
\end{theorem}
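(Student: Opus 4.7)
The plan is to assemble the theorem from the already-established pieces rather than prove it from scratch. For the first claim that $\EncCMVmixCMV{\cdot}$ is a good encoding, I would rely on \cite{casalMordidoVasconcelos22} for compositionality (by the inductive definition of the encoding on the syntax), name invariance, operational completeness, divergence reflection, and barb-sensitivity. Distributability preservation follows directly from the homomorphic translation of parallel composition that is noted in the introduction and in \S~\ref{sec:preliminaries}. The remaining criterion, operational soundness, is provided by Lemma~\ref{lem:soundnessEncCMVmixCMV}. Together these cover the seven requirements of Definition~\ref{def:goodEncodingA}.

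For the second claim, I would exhibit coupled simulations between $S$ and $\EncCMVmixCMV{S}$, viewing the two calculi under a common reduction relation (the disjoint union of $\stepCMVmix$ and $\stepCMV$, with barbs identified). A natural candidate is
\begin{align*}
\mathcal{R} \;=\; \{\,(S, T) \mid S \in \procCMVmix,\; T \in \procCMV,\; \EncCMVmixCMV{S} \steps T\,\}
\end{align*}
closed under $\bisimCMV$ on the target side, together with its converse. It suffices to show that $\mathcal{R}$ and $\mathcal{R}^{-1}$ are coupled simulations; coupled similarity of $S$ and $\EncCMVmixCMV{S}$ then follows because $(S, \EncCMVmixCMV{S}) \in \mathcal{R}$ trivially.

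To verify the three clauses for a pair $(S, T) \in \mathcal{R}$: the forward clause uses operational completeness to lift $S \steps S'$ to some $\EncCMVmixCMV{S} \steps T^{\ast} \bisimCMV \EncCMVmixCMV{S'}$, and then argues that, because $\EncCMVmixCMV{S} \steps T$ has already been performed, the completion steps can be replayed from $T$ to reach some $T'$ with $T' \bisimCMV \EncCMVmixCMV{S'}$; the coupling (catch-up) clause is delivered directly by Lemma~\ref{lem:soundnessEncCMVmixCMV}, which states that from any $T$ reachable from $\EncCMVmixCMV{S}$ there exist $S \steps S'$ and $T \steps T''$ with $T'' \bisimCMV \EncCMVmixCMV{S'}$---this is precisely the reason why \emph{coupled} similarity, rather than bisimilarity, is the appropriate notion here, since intermediate states (as exhibited by $T_2$ in Example~\ref{exa:translation}) are genuinely not bisimilar to any literal translation; and the barb clause is handled by barb-sensitivity, using that junk terms emit no barbs.

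The hard part will be the forward clause when $T$ is an intermediate state that has already committed to one branch of an in-progress emulation. One must show that the residual completion steps still commute with freshly emulating any further source-level reduction $S \step S'$, so that $T$ can both finish its pending emulation and overtake the new source step. I would dispatch this by inspecting the structure of $\EncCMVmixCMV{\cdot}$: a partial commitment touches only the two endpoints of a single channel of $S$ and leaves the translations of the other parallel components untouched, so the completeness derivation for the new step $S \step S'$ can be performed on the untouched part and then recombined. Once this commutation is established, completeness, soundness, and Lemma~\ref{lem:soundnessEncCMVmixCMV} together close both directions of the coupled simulation, yielding the theorem.
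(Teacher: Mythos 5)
Your assembly of the seven criteria of Definition~\ref{def:goodEncodingA} --- completeness and the structural properties from \cite{casalMordidoVasconcelos22}, distributability preservation from the homomorphic translation of the parallel operator, and soundness from Lemma~\ref{lem:soundnessEncCMVmixCMV} --- is exactly how the paper establishes goodness. The gap is in the second half. Your candidate relation $ \mathcal{R} = \{ (S, T) \mid \EncCMVmixCMV{S} \steps T \} $ (closed under $ \bisimCMV $ on the target side) is \emph{not} a coupled simulation: the first (simulation) clause fails at precisely the intermediate states you flag as ``the hard part''. Take $ T_2 $ from Example~\ref{exa:translation}: we have $ (S, T_2) \in \mathcal{R} $, and $ S $ can still perform a step in which $ x $ \emph{receives} (the summand $ \InpCMVmix{\Label}{z}{\inactCMVmix} $ synchronising with an output on $ y $), but $ T_2 $ has irrevocably discarded the $ \Label_2 $-branch that the encoding uses to emulate receiving on $ x $, so $ T_2 $ cannot reach any term related to that source derivative. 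Your proposed repair --- that the residual completion steps commute with freshly emulating any further source step because a partial commitment ``touches only the two endpoints of a single channel and leaves the other parallel components untouched'' --- is false exactly when the fresh source step competes for the \emph{same} choice as the committed branch, which is the generic situation for mixed choice. Indeed, if that commutation held, the encoding would be correct up to bisimilarity, contradicting the paper's explicit observation that $ T_2 \not\bisimCMV \EncCMVmixCMV{S'} $ for every $ S \step S' $ when $ S_1, \ldots, S_4 $ are pairwise non-bisimilar.

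The paper avoids this by invoking the general result of \cite{petersGlabbeek15}: operational correspondence in Gorla's formulation together with barb-sensitiveness induces a (weak reduction, barbed) coupled similarity relating all source terms to their literal translations. The construction there treats the two roles of an intermediate state asymmetrically: a committed $ T $ is paired with $ S $ only for the catch-up (coupling) obligations, where Lemma~\ref{lem:soundnessEncCMVmixCMV} does the work exactly as you describe, while for the simulation obligation $ S $ is paired only with targets still bisimilar to $ \EncCMVmixCMV{S} $, never with committed intermediate states. If you want a self-contained proof rather than the citation, you must build the relation with that asymmetry; a single reachability relation used in both roles cannot work.
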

\noindent 
For the proof we take the completeness result from \cite{casalMordidoVasconcelos22} and our soundness result in Lemma~\ref{lem:soundnessEncCMVmixCMV}.
The proof of the remaining properties is simple.
That the combination of operational correspondence and barb sensitiveness induces a (weak reduction, barbed) coupled similarity that relates all source terms and their literal translations was proved in \cite{petersGlabbeek15}.
To obtain a tighter connection such as the bisimilarity, we would need the stronger version of soundness with $ T \asymp \ArbitraryEncoding{S'} $ instead of $ T \stepsTarget \asymp \ArbitraryEncoding{S'} $ (see \cite{petersGlabbeek15}).

As mentioned, a key feature of the encoding is to translate the nature of its summands, \ie whether they are send or receive actions, into the label used by the target term.
That this is possible, \ie that the prefixes for send and receive in a choice of \CMVmix can be translated to labels in a separate choice of \CMV such that the difference is not observable modulo the criteria in Definition~\ref{def:goodEncodingA}, gives us the last piece of evidence that we need.
\CMVmix does not allow to solve problems such as leader election (Theorem~\ref{thm:separateCMVmixfromPiviaLeaderElection}) that are standard problems for mixed choice; \CMVmix cannot express the synchronisation pattern \patternStar either that we associate with mixed choice (Theorem~\ref{thm:separateCMVmixfromPiviaStar}). 
Yet, \CMVmix can express the pattern \patternM which is associated with \emph{separate choice}, and is encoded by a language with only separate choice (Theorem~\ref{thm:encodeCMVmixintoCMV}).
We conclude that choice in \CMVmix is semantically rather a separate choice.

\begin{corollary}
	\label{col:separateChoiceCMVmix}
	$ $\\
	The extension of \CMV given by \CMVmix introduces a form of separate choice rather than mixed choice.
\end{corollary}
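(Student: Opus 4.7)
The plan is to obtain the corollary as a direct synthesis of the three main results of the paper (Theorems~\ref{thm:separateCMVmixfromPiviaLeaderElection}, \ref{thm:separateCMVmixfromPiviaStar}, and \ref{thm:encodeCMVmixintoCMV}). The expressiveness gap between mixed and separate choice in the \piCal is witnessed by two independent distinguishing features: the ability to implement symmetric electoral systems \cite{palamidessi97} and the ability to realise the synchronisation pattern \patternStar \cite{petersNestmannGoltz13}. I would argue that \CMVmix fails on both features and, conversely, that it is fully expressible by pure separate choice.

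For the first half, I would invoke Theorems~\ref{thm:separateCMVmixfromPiviaLeaderElection} and \ref{thm:separateCMVmixfromPiviaStar}: neither of the two hallmark behaviours of mixed choice in the \piCal can be emulated by \CMVmix under a good (and distributability-preserving) encoding. This establishes the upper bound, namely that whatever the choice construct of \CMVmix actually is, it cannot strictly exceed separate choice: if it did, we would inherit the power to solve leader election in symmetric networks and to realise \patternStar, contradicting the two separation results.

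For the matching lower direction, I would invoke Theorem~\ref{thm:encodeCMVmixintoCMV}, which shows that $\EncCMVmixCMV{\cdot}$ is a good encoding of \CMVmix into \CMV relating source terms and their literal translations by coupled similarity. Since \CMV's choice constructs (branching and selection) are precisely the standard separate choices found in traditional session calculi, this witnesses that every mixed session is, up to coupled similarity, already implementable with pure separate choice. The \patternM-example exhibited for \CMVmix further pins the construct down to exactly the expressiveness level associated with separate choice rather than somewhere strictly below it.

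The main obstacle here is conceptual rather than computational: I must argue that coupled similarity (rather than bisimilarity) on the positive side is still tight enough to license the reading ``\CMVmix expresses at most separate choice''. I would address this by observing that the behavioural invariants exploited in the two separation arguments—reduction, conflict, distributability, and barbs—are preserved by coupled simulation, and that the Gorla-style criteria of a good encoding collected in Definition~\ref{def:goodEncodingA} are exactly the yardstick commonly used for such expressiveness classifications. Combining the upper bound from the separation results with the matching lower bound from the encoding then yields the corollary.
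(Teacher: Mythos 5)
Your proposal is correct and follows essentially the same route as the paper: the corollary is presented there not as a formal theorem but as a synthesis of the two separation results (Theorems~\ref{thm:separateCMVmixfromPiviaLeaderElection} and \ref{thm:separateCMVmixfromPiviaStar}), the goodness of the encoding into \CMV (Theorem~\ref{thm:encodeCMVmixintoCMV}), and the fact that \CMVmix does express the pattern \patternM---exactly the ingredients you combine. The only nuance the paper adds is to single out the encoding's key mechanism (translating the send/receive polarity of summands into labels of a separate choice, unobservably modulo the criteria of Definition~\ref{def:goodEncodingA}) as the decisive piece of evidence, which is a rephrasing rather than a different argument.
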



\section{Related Work and Outlook}
\label{sec:conclusions}

We conclude by discussing related work, summing up our results, and briefly discussing our next steps.

\subsection{Related Work}

Encodings or the proof of their absence are the main way to compare process calculi \cite{boerPalamidessi1991, parrow08, gorla10, fuLu10, peters12, vG12, parrow2014abstraction, fu16, glabbeek18}.
See \cite{peters19} for an overview and discussion on encodings.
We used this methodology to compare different variants of choice in
session types.

The relevance of mixed choice for the expressive power of the \piCal was extensively studied.
An important encodability result on choices 
is the existence of a good encoding from the choice-free synchronous \piCal into its asynchronous variant \cite{boudol92, honda.tokoro:objectcalculus}, since it proves the relevance of choice.
As for the separation result, 
\cite{palamidessi03, gorla10, breakingSymmetries16} have shown that there is no good encoding from the full \piCal, \ie the synchronous \piCal including mixed choice, into its asynchronous variant if an encoding should preserve the distribution of systems.
Palamidessi in \cite{palamidessi97} was the first to point out that mixed choice strictly raises the expressive power of the \piCal.
Later work studies the criteria under that this separation result
holds and alternative ways to prove this result:
\cite{nestmann00} studies the relevance of divergence reflection for this result and considers separate choice.
\cite{gorla10, parrow08} discuss how to reprove this result if the rather strict criterion on the homomorphic translation of the parallel operator is replaced by compositionality.
\cite{peters12, petersNestmann12} show that compositionality itself is not strong enough to replace the homomorphic translation of the parallel operator by presenting an encoding and then propose the preservation of distributability as criterion to regain the result of Palamidessi.
\cite{breakingSymmetries16} uses the more fundamental problem of breaking symmetries instead of leader election.
\cite{petersNestmannGoltz13} further simplifies this separation result by introducing synchronisation patterns to distinguish the languages.
\cite{peters16} shows that instead of the preservation of distributability or the homomorphic translation of the parallel operator also the preservation of causality can be used as criterion.

While there are a vast amount of
theories~\cite{DBLP:journals/csur/HuttelLVCCDMPRT16}, 
programming languages 
~\cite{DBLP:journals/ftpl/AnconaBB0CDGGGH16}, 
and tools~\cite{BETTYTOOLBOOK} of session types, 
as far as we know, the \CMVmix-calculus  
is the only session $\pi$-calculus which extends external and internal choices to 
their mixtures with full constructs, 
i.e.~delegation, shared (or unlimited) name passing, 
value passing, and recursion in its process syntax,   
proposes its typing system and proves type-safety. 
In the context of \emph{multiparty session types} \cite{HYC2016}, 
there are several works that extend 
the original form of global types where choice is fixed (from one sender to one receiver) with more flexible forms of choices:
Recent work in \cite{DBLP:conf/concur/MajumdarMSZ21} \eg allows the global type to specify a choice of one sender to transmit to one of several receivers.
In \cite{JY-ESOP20} flexible choices are discussed but their well-formedness (which ensures deadlock-freedom of local types) needs to be checked by bisimuluation.
These works focus on gaining expressiveness of behaviours of a set of local types (or a simple form of CCS-like processes which are equivalent to local types \cite{DBLP:conf/concur/MajumdarMSZ21}) which correspond to \emph{a single multiparty session}, without delegations, interleaved sessions, restrictions nor name passing.

More recently, \cite{DBLP:conf/esop/Glabbeek22} compares the expressive power of a variant of the \piCal (with implicit matching) and the variant of CCS where the result of a synchronisation of two actions is itself an action subject to relabelling or restriction.
Because of the connection between CCS-like languages and local types, it may be interesting to compare the expressiveness results in \cite{DBLP:conf/esop/Glabbeek22} with (variants of) multiparty session types.

\subsection{Summary and Outlook}

We proved that \CMVmix is strictly less expressive than the \piCal in two different ways: by showing that \CMVmix cannot solve leader election in symmetric networks of odd degree and that \CMVmix cannot express the synchronisation pattern \patternStar.
Then we provide the missing soundness proof for the encoding presented in \cite{casalMordidoVasconcelos22}.
From these results and the insights on the reasons of these results, we conclude that the choice primitive added to \CMV in \cite{casalMordidoVasconcelos22} is rather a separate choice and not a mixed choice at least with respect to its expressive power.

\begin{figure}
	\centering
	\begin{tikzpicture}[node distance=3cm, auto]
		\node (mix)		at (0, 1.8)		{\piMix};
		\node (sep)		at (-2, 0)	{\piSep};
		\node (asyn)	at (-1.25, 0)		{\piAsyn};
		\node (ma)		at (-0.3, 0)		{\MA};
		\node[color=blue] (CMV)		at (0.75, 0)		{\CMV};
		\node[color=blue] (CMVmix)		at (2, 0)		{\CMVmix};
		\node (join)	at (-0.75, -1.8)	{\join};
		\node (mau)		at (0.75, -1.8)	{\MAu};

		\node[scale=2] (star) at (2.75, 1) {\ \patternStar};
		\draw[dashed] (-2, 1) -- (2,1) -- (3,2);
		\draw[dashed] (2,1) -- (3,0);

		\node[scale=1.2] (M) at (-2.75, -1) {\patternM\quad};
		\draw[dashed] (2,-1) -- (-2,-1) -- (-3,-2);
		\draw[dashed] (-2,-1) -- (-3,0);
	\end{tikzpicture}
	\caption{Hierarchy of Pi-like Calculi.}
	\label{fig:hierarchy}
\end{figure}
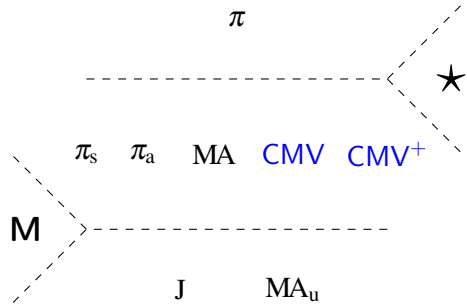

With these results we can extend the hierarchy of pi-like calculi obtained in \cite{petersNestmannGoltz13, petersNestmannIC20} by two more languages as depicted in Figure~\ref{fig:hierarchy}.
This hierarchy orders languages according to their ability to express certain synchronisation patterns.
At the top we have the \piCal ($ \piMix $), because it can express the synchronisation pattern \patternStar.
In the middle are languages that can express \patternM but not \patternStar: the \piCal with separate choice ($ \piSep $) \cite{nestmann00}, the asynchronous \piCal without choice ($ \piAsyn $) \cite{honda.tokoro:objectcalculus, boudol92}, Mobile Ambients ($ \MA $) \cite{cardelliGordon00}, \CMV, and \CMVmix.
In the bottom we have the join-calculus ($ \join $) \cite{fournet.gonthier:join} and Mobile Ambients with unique Ambient names ($ \MAu $) \cite{petersNestmannIC20}, \ie the languages that cannot express \patternStar or \patternM.
That $ \piMix, \piSep, \piAsyn, \MA, \join $, and $ \MAu $ can or cannot express the respective pattern was shown in \cite{petersNestmannGoltz13, petersNestmannIC20}.

Linearity as enforced by the type system of \CMV/\CMVmix restricts the possible structures of communication protocols.
In particular, the type system ensures that it is impossible to unguard two competing inputs or outputs on the same linear channel at the same time.
Accordingly, it is not surprising that adding choice, even mixed choice, towards communication primitives under a type discipline that enforces linearity does not significantly increase the expressive power of the respective language (though it still might increase flexibility).
However, that adding mixed choice between unrestricted communication primitives does not significantly increase the expressive power of the language, did surprise us.
Unrestricted channels allow to have several in- or outputs on these channels in parallel, because the type system only ensures the absence of certain communication mismatches as \eg that the sort of a transmitted value is as expected by the receiver; but not linearity (compare also to shared channels as \eg in \cite{hondaVasconcelosKubo98}).
So, there is no obvious reason why the type system should limit the expressive power of unrestricted channels within a mixed choice.
Indeed, it turns out that the problem lies not in the type system.
In both ways to prove the separation result in \S~\ref{sec:separateMixedSessionsLeaderElection} and \S~\ref{sec:separateMixedSessionsFromPiSynchronisationPatterns} we completely ignore the type system and carry out the proof on the untyped version of the language, \ie it is already the untyped version of \CMVmix that cannot express mixed choice despite a mixed-choice-like primitive.
This limitation of the language definition, \ie in its syntax and semantics, is not obvious and indeed it was very hard to spot the problem.

We expect that adding mixed choice to the non-linear parts of other session type systems will instead significantly increase the expressive power. Accordingly, as the next step, we want to add a primitive for mixed choice between shared channels in session types such as described \eg in \cite{hondaVasconcelosKubo98, yoshidaVasconcelos06} and analyse the expressiveness of the resulting language.

\paragraph{Acknowledgements.}
The work is partially supported by EPSRC (EP/T006544/1, EP/K011715/1, \linebreak EP/K034413/1, EP/L00058X/1, EP/N027833/1, EP/N028201/1, EP/T006544/1, EP/T014709/1, \linebreak EP/V000462/1 and EP/X015955/1) and NCSS/EPSRC VeTSS.


\bibliography{references}

\end{document}